\xpatchcmd{\algocf@Vline}{\vrule}{\vrule \kern-0.4pt}{}{} % make algorithm number correctly aligned
\xpatchcmd{\@algocf@start}{-1.5em}{0pt}{}{}  % remove right hand margin in algorithm
\newcommand{\cmark}{\ding{51}\xspace}
\newcommand{\xmark}{\ding{55}\xspace}
\theoremstyle{definition}
\newtheorem{definition}{Definition}
\newtheorem{lemma}{Lemma}
\newtheorem{theorem}{Theorem}
\newcommand{\ours}{RemoteRAG\xspace}
\title{\ours: A Privacy-Preserving LLM Cloud RAG Service}
\author{
Yihang Cheng\textsuperscript{1}, Lan Zhang\textsuperscript{1}, Junyang Wang\textsuperscript{1}, Mu Yuan\textsuperscript{2}, Yunhao Yao\textsuperscript{1} \\
\textsuperscript{1}University of Science and Technology of China \\
\textsuperscript{2}The Chinese University of Hong Kong \\
\small\texttt{yihangcheng@mail.ustc.edu.cn, zhanglan@ustc.edu.cn, iswangjy@mail.ustc.edu.cn} \\
\small\texttt{muyuan@cuhk.edu.hk, yaoyunhao@mail.ustc.edu.cn}
}
\begin{document}
\maketitle

\begin{abstract}
    Retrieval-augmented generation (RAG) improves the service quality of large language models by retrieving relevant documents from credible literature and integrating them into the context of the user query.
    Recently, the rise of the cloud RAG service has made it possible for users to query relevant documents conveniently.
    However, directly sending queries to the cloud brings potential privacy leakage.
    In this paper, we are the first to formally define the privacy-preserving cloud RAG service to protect the user query and propose \ours as a solution regarding privacy, efficiency, and accuracy.
    For privacy, we introduce $(n,\epsilon)$-DistanceDP to characterize privacy leakage of the user query and the leakage inferred from relevant documents.
    For efficiency, we limit the search range from the total documents to a small number of selected documents related to a perturbed embedding generated from $(n,\epsilon)$-DistanceDP, so that computation and communication costs required for privacy protection significantly decrease.
    For accuracy, we ensure that the small range includes target documents related to the user query with detailed theoretical analysis.
    Experimental results also demonstrate that \ours can resist existing embedding inversion attack methods while achieving no loss in retrieval under various settings.
    Moreover, \ours is efficient, incurring only $0.67$ seconds and $46.66$KB of data transmission ($2.72$ hours and $1.43$ GB with the non-optimized privacy-preserving scheme) when retrieving from a total of $10^6$ documents.
\end{abstract}

\section{Introduction}

Large language models (LLMs) have attracted widespread attention since the release of ChatGPT \citep{ChatGPT}.
However, LLM is not without its flaws.
One major issue is its tendency to generate factually incorrect or purely fictional responses, a phenomenon known as hallucination \citep{DBLP_conf_chi_LeiserELKMSS24,DBLP_journals_corr_abs_2310_01469}.

To mitigate this problem, retrieval-augmented generation (RAG) \citep{DBLP_conf_nips_LewisPPPKGKLYR020} has been proposed to offer credible external knowledge, providing significant convenience for numerous tasks.
RAG aims to understand the input query, extract relevant information from external data sources, and enhance the quality of the generated answers \citep{DBLP_conf_icml_BorgeaudMHCRM0L22,DBLP_conf_nips_LewisPPPKGKLYR020,DBLP_journals_corr_abs_2306_06615}.
Specifically, RAG allows the retrieval of relevant documents which can help understand or answer the input query and inserts them into the context of prompts to improve the output of LLM \citep{DBLP_conf_eacl_IzacardG21}.
Its ability to enable LLM to provide answers with credible literature makes RAG an important technique in the application of LLM, leading to the development of many excellent and user-friendly open-source RAG projects \citep{githubGitHubLanggeniusdify,githubGitHubQuivrHQquivr,githubGitHubChatchatspaceLangchainChatchat}.

To leverage the power of RAG, a new concept RAG-as-a-Service (RaaS) has been proposed, gathering significant attention \citep{geniuseeServiceRetrieval,Nuclia}. 
In RaaS, the RAG service is entirely hosted online in the cloud.
The user submits requests to the cloud with input queries to receive responses from RaaS.
In this scenario, the cloud serves as the maintainer of the RAG service.
While the current solution facilitates the wide adoption of RaaS, it raises serious privacy concerns.
The input query may contain sensitive information, such as health conditions and financial status.
Unfortunately, this data is not protected and must be uploaded in plaintext to the cloud in order to retrieve relevant documents.
In this study, we aim to tackle a challenging question:
\textit{How can we minimize privacy leakage in queries for RaaS while ensuring the accuracy of the responses, all with minimal additional costs?}

Targeting the question above, we design a novel solution \ours.
\underline{For privacy}, we propose $(n,\epsilon)$-DistanceDP inspired by differential privacy and an embedding perturbation mechanism, so that the user can control the privacy leakage with a privacy budget $\epsilon$ in $n$-dimensional space of embeddings.
We further study the potential privacy leakage in averaging the most relevant embeddings and find it within the constraint of $(n,\epsilon)$-DistanceDP most of the time.
\underline{For efficiency}, we limit the search range from the total documents to a small number of selected documents, which are the relevant documents to a perturbed embedding generated from $(n,\epsilon)$-DistanceDP.
This small search range can save a significant amount of computation and communication costs used for privacy protection.
\underline{For accuracy}, we theoretically analyze the minimum size of the relevant documents to the perturbed embedding to ensure that they do contain target documents for the original query.

\textbf{Contributions} of this paper are listed as follows:
\begin{itemize}[leftmargin=*,noitemsep,topsep=0pt,wide=0pt,labelwidth=5pt]
    \item To the best of our knowledge, we are the first to address the privacy-preserving cloud RAG service problem.
          We formally define the privacy-preserving cloud RAG service and characterize its corresponding threat model.
    \item We propose \ours as a solution to the privacy-preserving cloud RAG service regarding privacy, efficiency, and accuracy.
          We define $(n,\epsilon)$-DistanceDP to characterize privacy leakage of the user query and design a mechanism to generate a perturbed embedding for the cloud for privacy, as well as to retrieve relevant documents within a minimum search range for efficiency.
          Accuracy is ensured by theoretical analysis of the minimum range produced by the perturbed embedding.
    \item We conduct extensive experiments to demonstrate that \ours can resist existing embedding inversion attack methods while achieving no loss in retrieval under various settings.
          The experiment results also show the efficiency of \ours, incurring only $0.67$ seconds and $46.66$KB of data transmission ($2.72$ hours and $1.43$ GB with the non-optimized privacy-preserving scheme) when retrieving from a total of $10^6$ documents.
\end{itemize}

\section{Problem Formulation}
\label{sec:formulation}

We first formally define the problem in developing a privacy-preserving LLM cloud RAG service.
The main notations used in this paper are listed in \cref{tab:notations} for ease of reference.

\begin{table}[t]
    \caption{Notation table.}
    \label{tab:notations}
    \centering
    \begin{tabular}{@{}ll@{}}
        \toprule
        Sym.       & Description                                 \\
        \midrule
        $N$        & Number of RAG documents in the cloud        \\
        $\epsilon$ & Privacy budget                              \\
        $e_k$      & User query embedding                        \\
        $k$        & Number of top documents related to $e_k$    \\
        $e_{k'}$   & Perturbed embedding                         \\
        $k'$       & Number of top documents related to $e_{k'}$ \\
        $n$        & Dimensional space of embeddings             \\
        \bottomrule
    \end{tabular}
\end{table}

\subsection{Problem Setup and Threat Model}
\label{subsec:setup}

One RAG request process involves two sides: a cloud and a user.
The cloud hosts a substantial number ($N$) of documents as a RAG service.
The user submits a request with a user query, and the RAG service in the cloud should retrieve top $k$ relevant documents.
An embedding model, shared between two sides, enables the user to convert the query into an embedding $e_k$.
Additionally, the user has a privacy budget $\epsilon$ intended to measure and limit the privacy leakage of the query.

\textbf{Threat model.}
We consider this scenario semi-honest, where both sides adhere to the protocol but the cloud is curious about the private information of the user query.
During the RAG request process, the user should not reveal the semantic information of the query beyond the privacy budget $\epsilon$ allows.
Apart from the query itself, we should also consider the protection of the query embedding and the indices of top $k$ documents:
\begin{itemize}[label=$\triangleright$,leftmargin=*,noitemsep,topsep=0pt,wide=0pt]
    \item \textit{Protection of the query embedding.}
          Existing attack methods \citep{DBLP_conf_emnlp_MorrisKSR23,DBLP_journals_corr_abs_2402_12784} have demonstrated that the semantic information can be extracted from the embedding if the embedding model is accessible.
          Consequently, safeguarding the semantic information of the query necessitates the protection of its embedding as well.
    \item \textit{Protection of the indices of top $k$ documents.}
          The embeddings of top $k$ documents are situated in proximity to the query embedding, which potentially leads to the leakage of the query embedding.
          Specifically, the average of top $k$ document embeddings could be close to the query embedding, for which the privacy leakage should be carefully studied.
\end{itemize}

\subsection{Design Scope}

In \ours, we examine the potential privacy leakage occurring during the data transmission between the cloud and the user.
Considerations of offline RAG services downloaded directly from the cloud or internal privacy issues \citep{DBLP_journals_corr_abs_2402_16893} specific to RAG are beyond the scope of this paper.

\begin{figure*}[t]
    \centering
    \includegraphics[width=\textwidth]{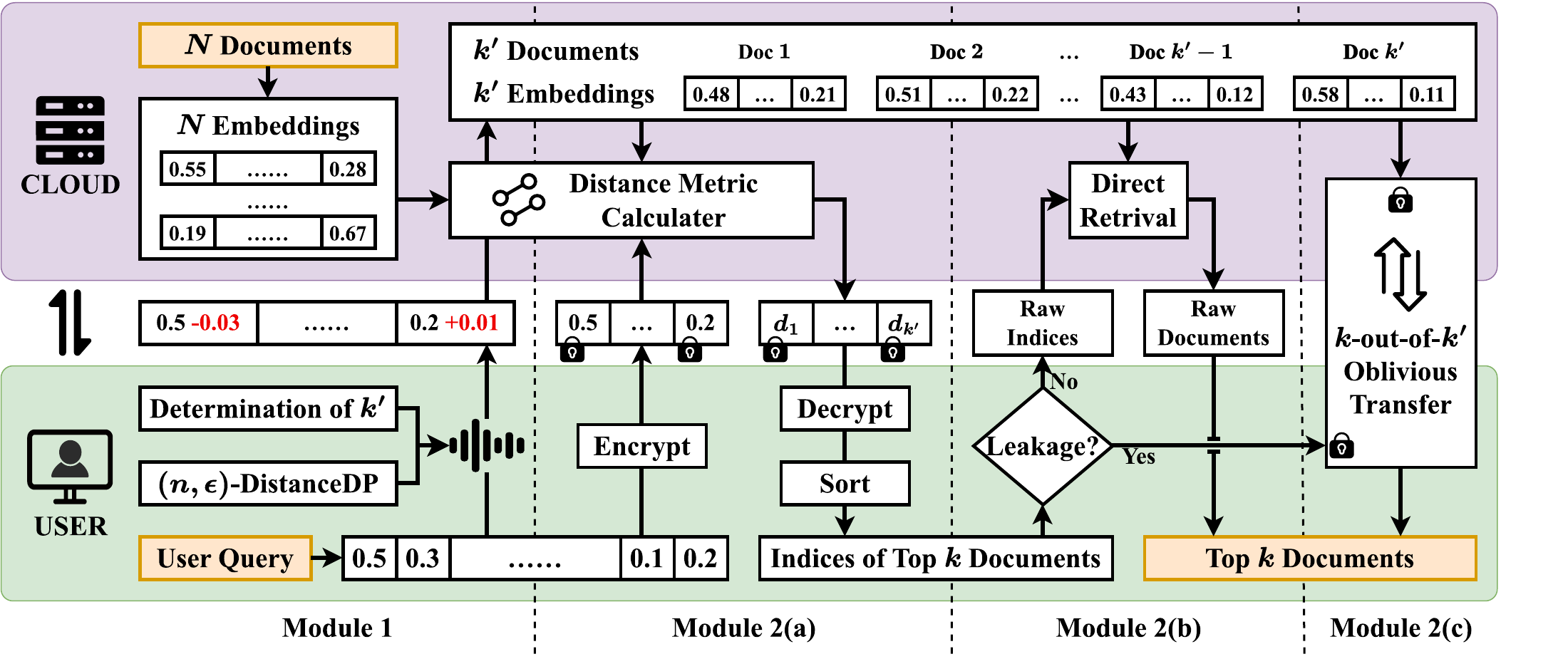}
    \caption{The flowchart of \ours. Module 1 preserves privacy with $(n,\epsilon)$-DistanceDP and improves efficiency by limiting the search range. Module 2 retrieves documents with different choices based on leakage circumstances.}
    \label{fig:overview}
\end{figure*}

\section{System Design}

\subsection{Overview}

The flowchart of \ours is shown in \cref{fig:overview}.
Under the control of $(n,\epsilon)$-DistanceDP, module \hyperref[subsec:module1]{1} aims to reduce the search range to enhance efficiency while ensuring accuracy.
Module \hyperref[subsec:module2]{2} targets safely retrieving top $k$ relevant documents within the limited range from two optional choices under different leakage circumstances.

\subsection{Range Limitation with Privacy Budget}
\label{subsec:module1}

\subsubsection{Generation of Perturbation}
\label{subsubsec:module1_generation}

Given the requirement that the query embedding should not be transmitted to the cloud, we opt to send a perturbed embedding instead.
To assess the potential privacy leakage, we utilize the differential privacy (DP) theory to define $(n,\epsilon)$-DistanceDP in $n$-dimensional space:

\begin{definition}[$(n,\epsilon)$-DistanceDP]
    A mechanism $K$ satisfies $(n,\epsilon)$-DistanceDP if and only if $\forall x,x'\in\mathbb{R}^n$:
    \[L(K(x),K(x'))\le\epsilon\|x-x'\|\]
    where $\epsilon$ is a given privacy budget, $\|x-x'\|$ denotes L2 distance, and $L(K(x),K(x'))=\ln\frac{Pr(K(x)=y)}{Pr(K(x')=y)}$ represents the distance between the probabilities of any target value $y$ drawn from the distributions $K(x)$ and $K(x')$ generated by points $x$ and $x'$.
\end{definition}

To apply $(n,\epsilon)$-DistanceDP in \ours, for any query embedding $e_k$, the user generates a perturbed embedding $e_{k'}$ using a noise function, which should satisfy that from the perspective of $e_{k'}$, the probability of generating the query embedding $e_k$ and another random embedding $e_x$ around $e_{k'}$ with the noise function differs by at most a multiplicative factor of $e^{-\epsilon\|e_k-e_x\|}$.

The property above can be achieved by utilizing the Laplace distribution, as discussed in \citep{DBLP_conf_ccs_AndresBCP13,DBLP_conf_tcc_DworkMNS06}.
The primary difference lies in our application within higher $n$-dimensional space.
Given the privacy budget $\epsilon\in\mathbb{R}^+$ and the actual point $x_0\in\mathbb{R}^n$, the probability density function (pdf) of the noise function at any other point $x\in\mathbb{R}^n$ is given by:
\[D_{n,\epsilon}(x|x_0)\propto e^{-\epsilon\|x-x_0\|}\]

Directly generating a point according to the above distribution is challenging.
We first focus on the radial component $r=\|x-x_0\|$, whose marginal distribution is given by:
\[D_{n,\epsilon}(r)\propto r^{n-1}e^{-\epsilon r}\]
which corresponds exactly to the pdf of the gamma distribution with shape parameter $n$ and scale parameter $\frac{1}{\epsilon}$.
Therefore, the radial component $r\sim D_{n,\epsilon}(r)=\mathrm{Gamma}(n,\frac{1}{\epsilon})$.

Next, the direction vector is sampled from a uniform distribution on the $n$-dimensional unit sphere.
This is accomplished by independently sampling $t_i\sim N(0,1)$ from the standard normal distribution and then normalizing $t_i$ by $\frac{t_i}{\sqrt{\sum_{j=1}^{n}{t_j}^2}},i\in[1,n]$.

\begin{figure}[t]
    \centering
    \includegraphics[width=0.93\columnwidth]{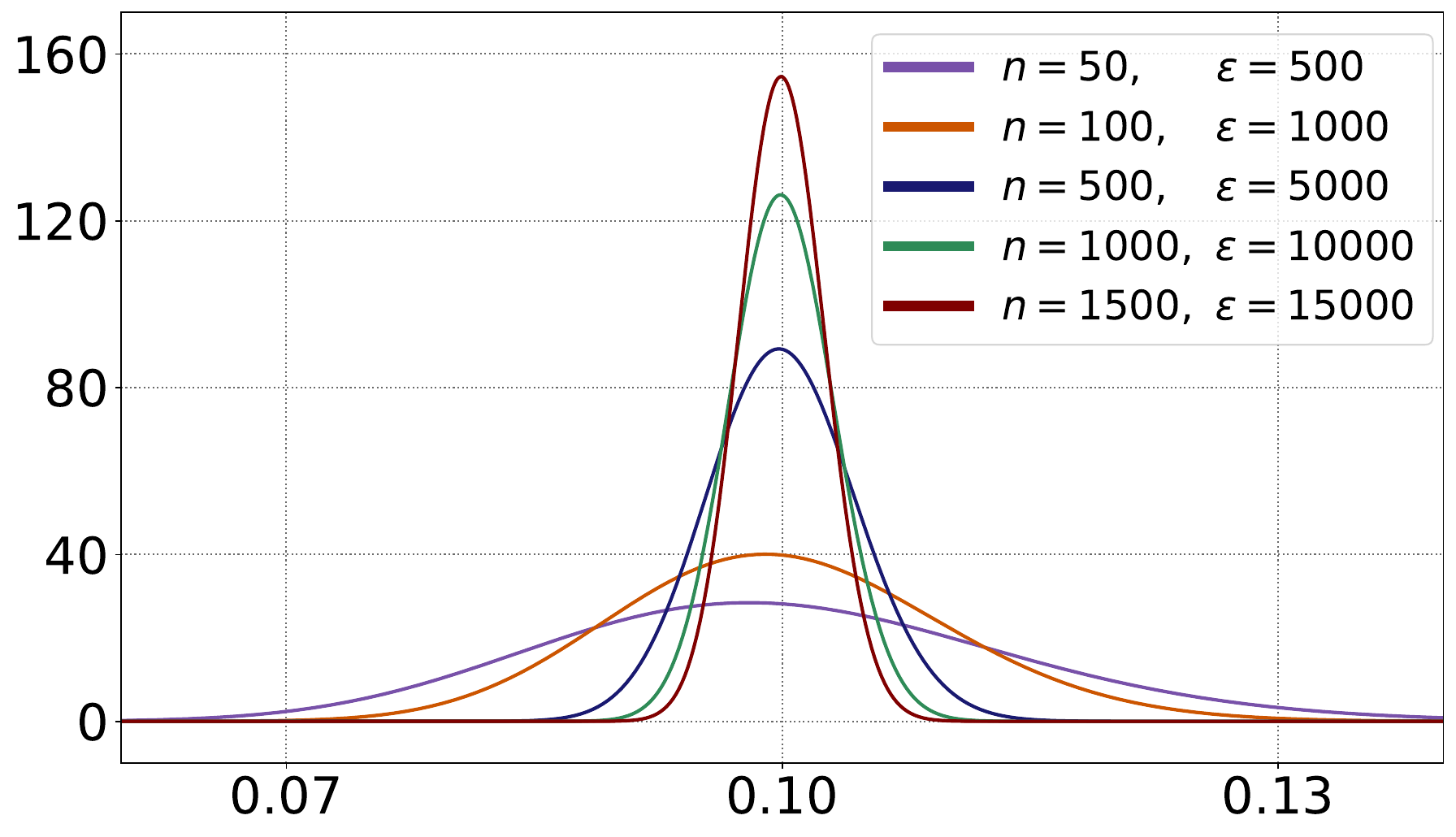}
    \caption{The probability density function of different gamma distributions within $[0.06,0.14]$ range. ($\epsilon=10n$)}
    \label{fig:gamma}
\end{figure}

\textbf{\boldmath Appropriate range for $\epsilon$.}
We offer the following guidelines to help choose the privacy budget $\epsilon$.
Based on the previous analysis, we observe that the perturbation $r\sim\mathrm{Gamma}(n,\frac{1}{\epsilon})$, and the expected value of it is $\bar{r}=\frac{n}{\epsilon}$.
Current embedding models often produce embeddings with a large dimension (e.g., $384$ for all-MiniLM-L12-v2 \citep{huggingfaceSentencetransformersallMiniLML12v2Hugging}, $768$ for gtr-t5-base \citep{huggingfaceSentencetransformersgtrt5baseHugging}, and $1536$ for text-embedding-ada-002 \citep{OpenAI1}).
The pdf of $\mathrm{Gamma}(n,\frac{1}{\epsilon})$ becomes increasingly steep as the dimension $n$ increases, as illustrated in \cref{fig:gamma}.
This implies that the radial components drawn from the distribution are likely to cluster around $\bar{r}$.
Given that embeddings are already normalized, a typical perturbation should fall within the range of $0.02$ to $0.1$, as discussed in \cref{fig:attack-r}, which corresponds to the privacy budget $\epsilon$ ranging from $10n$ to $50n$.

\subsubsection{Calculation of Search Range}

After generating the perturbation, the user then requests the cloud to retrieve top $k'$ documents related to the perturbed embedding $e_{k'}$, thereby limiting the search range.
To maintain accuracy, it is crucial to ensure that these $k'$ documents include top $k$ documents related to the query embedding $e_k$.
This requirement motivates the need to determine the appropriate value for $k'$.

\begin{lemma}
    \label{lem:k_alpha_k}
    Assume that there are $N$ embeddings uniformly distributed on the surface of the $n$-dimensional unit sphere.
    Let $\alpha_k$ be the polar angle of the surface area formed by top $k$ embeddings related to any given embedding.
    Then, $k$ and $\alpha_k$ satisfy the following relationship:
    \[k=N\cdot\frac{\Omega_{n-1}(\pi)}{\Omega_{n}(\pi)}\cdot\int_{0}^{\alpha_{k}}\sin^{n-2}\theta\,\mathrm{d}\theta\]
    where $\Omega_{n}(\pi)=\frac{2\pi^\frac{n}{2}}{\Gamma(\frac{n}{2})}$ represents the surface area of the unit $n$-sphere.
\end{lemma}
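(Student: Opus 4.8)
The plan is to reinterpret the combinatorial quantity $k$ as an expected count under the uniform distribution on the sphere, and then reduce that count to a surface-area ratio that can be evaluated in hyperspherical coordinates. Concretely, fix an arbitrary reference embedding and rotate so that its direction is the north pole of $S^{n-1}\subset\mathbb{R}^n$. The ``top $k$ embeddings related to it'' are the $k$ points of largest cosine similarity, equivalently the $k$ points of smallest polar angle from the pole; by the definition of $\alpha_k$ these are exactly the points lying in the spherical cap $C(\alpha_k)=\{x\in S^{n-1}:\text{polar angle of }x\le\alpha_k\}$. Since the $N$ embeddings are uniform on $S^{n-1}$, the expected number falling into any measurable region equals $N$ times the ratio of that region's $(n-1)$-dimensional surface measure $\sigma$ to the total, so $k=N\cdot\sigma(C(\alpha_k))/\sigma(S^{n-1})$. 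It then remains to compute $\sigma(C(\alpha_k))$.

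For that I would parametrize $S^{n-1}$ by writing each point as $x=(\cos\theta,\ \sin\theta\,u)$ with polar angle $\theta\in[0,\pi]$ and azimuthal direction $u\in S^{n-2}$. A short computation of the induced metric gives $ds^2=d\theta^2+\sin^2\theta\,ds^2_{S^{n-2}}$, so the surface element factorizes as $\sin^{n-2}\theta\,d\theta\,d\sigma_{S^{n-2}}(u)$. Integrating $u$ over all of $S^{n-2}$ contributes its total area, which in the paper's notation is $\Omega_{n-1}(\pi)=\frac{2\pi^{(n-1)/2}}{\Gamma((n-1)/2)}$, and integrating $\theta$ from $0$ to $\alpha_k$ then yields
\[\sigma(C(\alpha_k))=\Omega_{n-1}(\pi)\int_{0}^{\alpha_k}\sin^{n-2}\theta\,\mathrm{d}\theta.\]
Taking $\alpha_k=\pi$ recovers $\sigma(S^{n-1})=\Omega_n(\pi)$, which doubles as a consistency check on the normalization. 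Substituting both expressions into the ratio above gives the claimed identity $k=N\cdot\frac{\Omega_{n-1}(\pi)}{\Omega_n(\pi)}\int_{0}^{\alpha_k}\sin^{n-2}\theta\,\mathrm{d}\theta$.

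I expect the only genuine subtleties to be bookkeeping rather than mathematics. First, one should be explicit that the statement is an identity in expectation (equivalently, in the idealized continuum where the point density is exactly uniform), since for a finite random sample the $k$-th nearest-neighbor polar angle is itself random and the equality holds only on average; the lemma is used downstream precisely in this averaged sense. Second, one must reconcile the paper's $\Omega$-notation with the standard hypersphere formula, i.e., verify that $\Omega_n(\pi)=\frac{2\pi^{n/2}}{\Gamma(n/2)}$ is indeed the $(n-1)$-volume of the unit sphere in $\mathbb{R}^n$ and that the recursion $\Omega_n(\pi)=\Omega_{n-1}(\pi)\int_0^{\pi}\sin^{n-2}\theta\,\mathrm{d}\theta$ is valid — this follows from the Beta-function identity $\int_0^{\pi}\sin^{n-2}\theta\,\mathrm{d}\theta=\sqrt{\pi}\,\Gamma(\tfrac{n-1}{2})/\Gamma(\tfrac{n}{2})$. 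Everything past these two points is a direct substitution, so the ``hard part'', such as it is, is just setting up the hyperspherical measure decomposition cleanly.
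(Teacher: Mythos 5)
Your proposal is correct and follows essentially the same route as the paper's proof: both express $k/N$ as the ratio of the spherical cap's surface area to the total sphere area under the uniform-density assumption, and both obtain the cap area by slicing along the polar angle into $(n-2)$-spheres of radius $\sin\theta$, yielding the factor $\Omega_{n-1}(\pi)\int_0^{\alpha_k}\sin^{n-2}\theta\,\mathrm{d}\theta$. Your explicit remarks on the expectation interpretation and the consistency check at $\alpha_k=\pi$ are sound refinements of the same argument rather than a different approach.
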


\begin{figure}[t]
    \centering
    \begin{subcaptionblock}[t]{0.48\columnwidth}
        \centering
        \includegraphics[width=0.95\columnwidth]{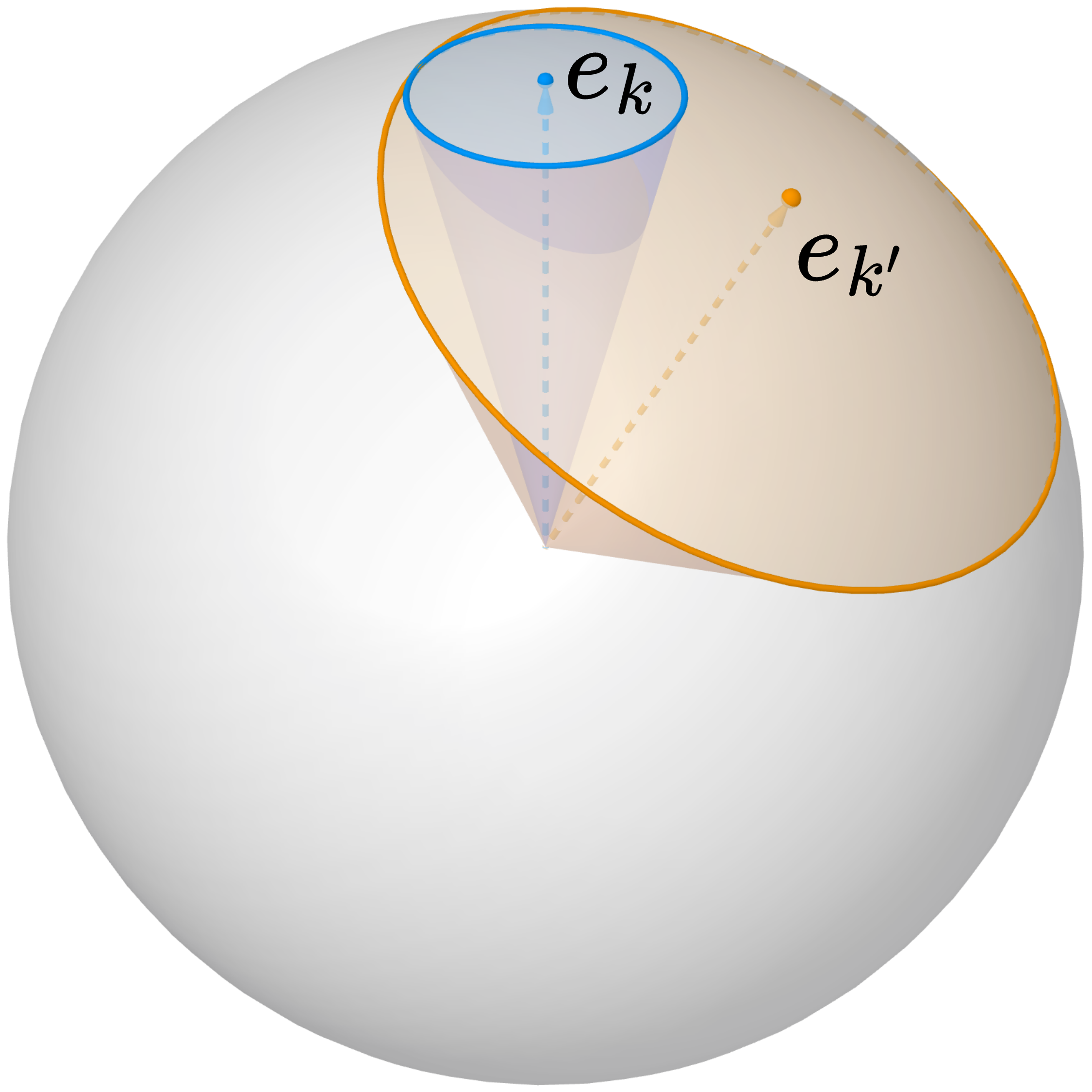}
        \caption{Oblique projection.\\ Top $k'$ documents related to $e_{k'}$ include top $k$ documents related to $e_k$.}
        \label{fig:thm1proof_global}
    \end{subcaptionblock}%
    \hfill%
    \begin{subcaptionblock}[t]{0.48\columnwidth}
        \centering
        \includegraphics[width=0.95\columnwidth]{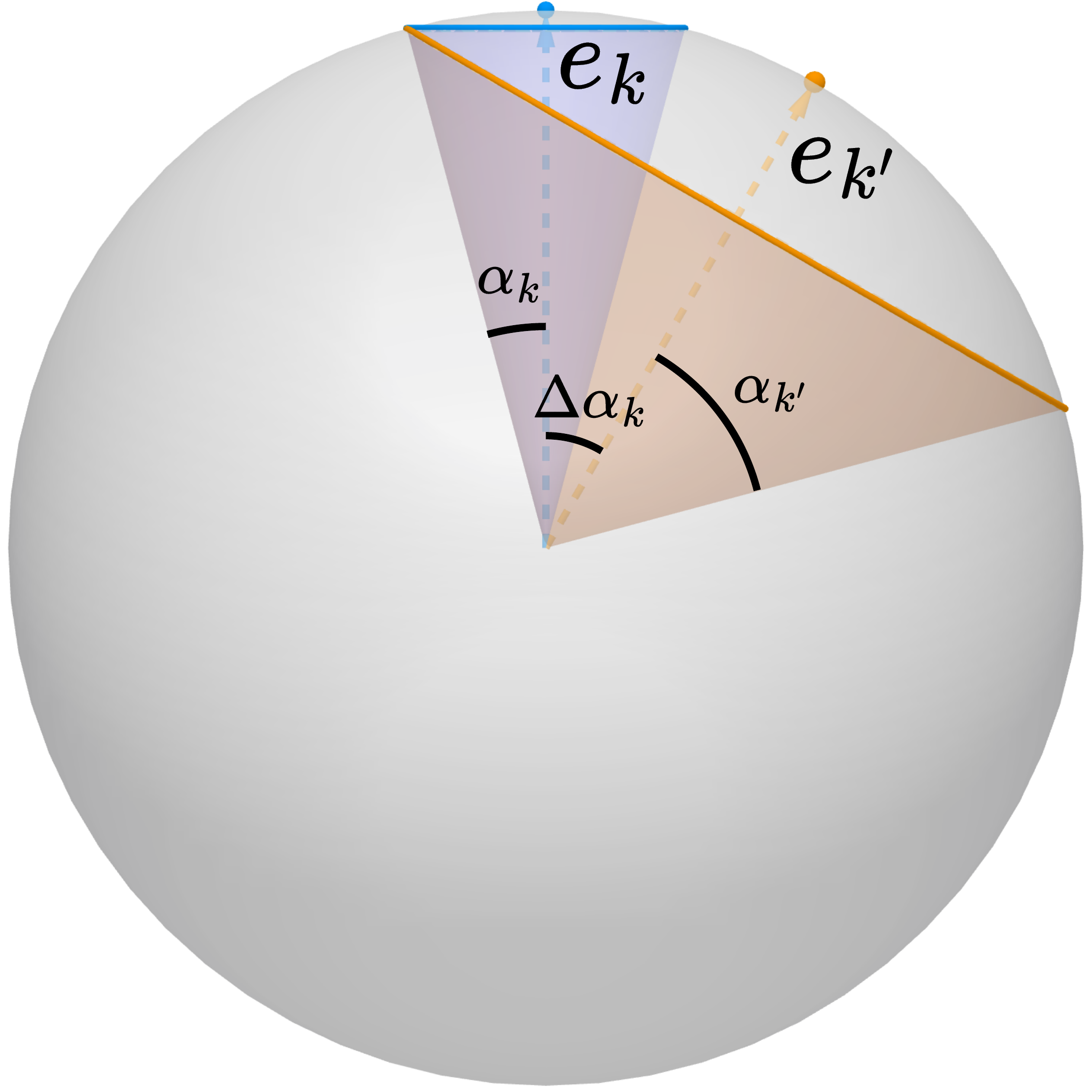}
        \caption{Orthographic projection.\\ $\alpha_{k'}=\alpha_k+\Delta\alpha_k$.}
        \label{fig:thm1proof_special}
    \end{subcaptionblock}
    \caption{Illustration in 3D projection.}
    \label{fig:thm1proof}
\end{figure}

From \cref{lem:k_alpha_k}, we can derive the polar angle $\alpha_k$ from $k$.
Since the perturbation is small ($r\ll1$), the perturbed angle $\Delta\alpha_k$ between $e_k$ and $e_{k'}$ can be approximately by $r$.
To ensure that top $k'$ documents related to the perturbed embedding $e_{k'}$ include top $k$ documents related to the query embedding $e_k$, we further propose \cref{thm:kpr_k} and illustrate the principle in \cref{fig:thm1proof}.

\begin{theorem}
    \label{thm:kpr_k}
    Under the conditions specified in \cref{lem:k_alpha_k},
    given two embeddings $e_k$ and $e_{k'}$ with the perturbed angle $\Delta\alpha_k$, to ensure that top $k'$ embeddings related to $e_{k'}$ include top $k$ embeddings related to $e_k$, $k'$ and $k$ satisfy the following relationship:
    \[\Delta k=k'-k=N\cdot\frac{\Omega_{n-1}(\pi)}{\Omega_{n}(\pi)}\cdot\int_{\alpha_k}^{\alpha_{k'}}\sin^{n-2}\theta\,\mathrm{d}\theta\]
    where $\alpha_{k'}=\alpha_{k}+\Delta\alpha_k$.
\end{theorem}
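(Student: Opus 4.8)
The plan is to reduce \cref{thm:kpr_k} to a triangle-inequality argument on the sphere together with a direct application of \cref{lem:k_alpha_k}. First I would spell out the geometric meaning of ``top $k$ embeddings related to a given embedding'': since all embeddings are unit vectors, relatedness (inner product / cosine similarity) is a strictly decreasing function of the geodesic angle between two points, so the top $k$ embeddings related to $e_k$ are exactly those lying in the spherical cap $C(e_k,\alpha_k)$ of angular radius $\alpha_k$ about $e_k$, and likewise the top $k'$ related to $e_{k'}$ fill the cap $C(e_{k'},\alpha_{k'})$. Consequently, ``top $k'$ related to $e_{k'}$ include top $k$ related to $e_k$'' is guaranteed as soon as $C(e_k,\alpha_k)\subseteq C(e_{k'},\alpha_{k'})$.

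Next I would pin down the containment condition. Geodesic distance on the unit sphere is the angle between unit vectors and obeys the triangle inequality, so for any $p\in C(e_k,\alpha_k)$ we have $\angle(p,e_{k'})\le\angle(p,e_k)+\angle(e_k,e_{k'})\le\alpha_k+\Delta\alpha_k$. Hence $\alpha_{k'}=\alpha_k+\Delta\alpha_k$ suffices for the inclusion, and it is the smallest sufficient value: taking $p$ on the great circle through $e_k$ and $e_{k'}$ at angular distance $\alpha_k$ from $e_k$, on the side pointing away from $e_{k'}$, gives $\angle(p,e_{k'})=\alpha_k+\Delta\alpha_k$ exactly, so any smaller $\alpha_{k'}$ would exclude a point of $C(e_k,\alpha_k)$. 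This is precisely the choice $\alpha_{k'}=\alpha_k+\Delta\alpha_k$ appearing in the statement.

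Finally I would close the argument by bookkeeping with \cref{lem:k_alpha_k}. Applying the lemma once at angle $\alpha_{k'}$ and once at angle $\alpha_k$ expresses $k'$ and $k$ as $N\cdot\frac{\Omega_{n-1}(\pi)}{\Omega_{n}(\pi)}$ times $\int_0^{\alpha_{k'}}\sin^{n-2}\theta\,\mathrm{d}\theta$ and $\int_0^{\alpha_{k}}\sin^{n-2}\theta\,\mathrm{d}\theta$ respectively; subtracting and using additivity of the integral over $[0,\alpha_k]\cup[\alpha_k,\alpha_{k'}]$ yields
\[\Delta k=k'-k=N\cdot\frac{\Omega_{n-1}(\pi)}{\Omega_{n}(\pi)}\cdot\int_{\alpha_k}^{\alpha_{k'}}\sin^{n-2}\theta\,\mathrm{d}\theta,\]
with $\alpha_{k'}=\alpha_k+\Delta\alpha_k$, as claimed.

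The hard part will be nothing computational but rather making the containment step airtight: one has to be sure that relatedness is genuinely monotone in the angle (so ``top $k$'' corresponds to a cap rather than some irregular region) and that the triangle inequality is applied to geodesic angles, not to chordal or Euclidean distances. A minor caveat worth a remark is the degenerate regime $\alpha_k+\Delta\alpha_k\ge\pi$, where $C(e_{k'},\alpha_{k'})$ is the whole sphere and $k'=N$; since the theorem is aimed at the small-perturbation setting ($r\ll1$, $\Delta\alpha_k\approx r$), this does not occur, but it should be noted for completeness. Everything else is substitution into \cref{lem:k_alpha_k}.
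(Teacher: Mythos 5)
Your proposal is correct and follows essentially the same route as the paper: fix $\alpha_{k'}=\alpha_k+\Delta\alpha_k$, apply \cref{lem:k_alpha_k} at both angles, and subtract. The only difference is that you make the key containment step rigorous via the geodesic triangle inequality (and note its tightness), whereas the paper justifies $\alpha_{k'}=\alpha_k+\Delta\alpha_k$ only by appeal to the projection figure; your version is the more careful one.
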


\subsection{Retrieval with Cryptographic Protection}
\label{subsec:module2}

\subsubsection{Homomorphic Encryption for Indices}
\label{subsubsec:module2_a}

After limiting the search range to $k'$ documents, we now focus on obtaining the indices of top $k$ documents for the query embedding $e_k$.
Based on documentations of several open-source vector databases such as ChromaDB \citep{githubGitHubChromacorechroma}, FAISS \citep{DBLP_journals_corr_abs_2401_08281}, and Elasticsearch \citep{githubGitHubElasticelasticsearch}, we find that only two distance metrics are being used by all of them and set to be default: L2 distance and cosine distance.

\begin{definition}[Distance metrics]
    \label{def:cos_dis}
    Given two normalized embeddings $e_a$ and $e_b$ of the same dimension, L2 distance and cosine distance are calculated as follows:
    \begin{align*}
        d_{l2}(e_a,e_b)&=\|e_a-e_b\| \\
        d_{\cos}(e_a,e_b)&=1-\frac{\langle e_a,e_b\rangle}{\|e_a\|\cdot\|e_b\|}=1-\langle e_a,e_b\rangle
    \end{align*}
\end{definition}

Further analysis in \cref{thm:distance} reveals that using which distance metric does not affect the ranking of normalized vectors.
Therefore, we only consider cosine distance as the standard metric in this paper.

\begin{theorem}
    \label{thm:distance}
    Given two normalized embeddings $e_a$ and $e_b$ of the same dimension, L2 distance and cosine distance have the following relationship:
    \[d_{l2}(e_a,e_b)=\sqrt{2d_{\cos}(e_a,e_b)}\]
\end{theorem}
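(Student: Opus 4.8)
The plan is to reduce the identity to a short algebraic manipulation of the squared L2 distance, exploiting that both embeddings are unit vectors. First I would expand
\[
d_{l2}(e_a,e_b)^2=\|e_a-e_b\|^2=\langle e_a-e_b,\,e_a-e_b\rangle=\|e_a\|^2-2\langle e_a,e_b\rangle+\|e_b\|^2.
\]
Since $e_a$ and $e_b$ are normalized, $\|e_a\|^2=\|e_b\|^2=1$, so this collapses to $2-2\langle e_a,e_b\rangle=2\bigl(1-\langle e_a,e_b\rangle\bigr)$.

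Next I would invoke \cref{def:cos_dis}: for normalized vectors, $d_{\cos}(e_a,e_b)=1-\langle e_a,e_b\rangle$. Substituting gives $d_{l2}(e_a,e_b)^2=2\,d_{\cos}(e_a,e_b)$. Taking the nonnegative square root of both sides yields the claimed relation $d_{l2}(e_a,e_b)=\sqrt{2\,d_{\cos}(e_a,e_b)}$.

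The only point requiring a word of justification is that the square root on the right-hand side is real, i.e.\ that $d_{\cos}(e_a,e_b)\ge 0$; this follows immediately from the Cauchy--Schwarz inequality $\langle e_a,e_b\rangle\le\|e_a\|\,\|e_b\|=1$. There is no genuine obstacle here — the statement is essentially the polarization identity specialized to unit vectors — so the "hard part" is merely being careful that the normalization hypothesis is used in both places (to simplify the norms and to get the simplified form of $d_{\cos}$). Once the chain of equalities is written out, the monotonicity remark used in the surrounding text (that the two metrics induce the same ranking) is an immediate corollary, since $t\mapsto\sqrt{2t}$ is strictly increasing on $[0,2]$.
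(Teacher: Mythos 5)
Your proof is correct and follows essentially the same route as the paper's: expand $\|e_a-e_b\|^2$, use $\|e_a\|=\|e_b\|=1$ to reduce it to $2(1-\langle e_a,e_b\rangle)=2d_{\cos}(e_a,e_b)$, and take square roots (the paper writes the expansion coordinate-wise rather than via inner-product bilinearity, but that is only notational). Your added remark that $d_{\cos}\ge 0$ by Cauchy--Schwarz is a small, harmless bonus the paper omits.
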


Recall that the query embedding $e_k$ should not be revealed to the cloud.
Considering that cosine distance involves only linear operations, we propose using partially homomorphic encryption (PHE).
Compared to fully homomorphic encryption, PHE is more computationally efficient and sufficient for calculating cosine distance.

Specifically, the user encrypts the query embedding $e_k$ using PHE and sends the encrypted form $\llbracket e_k\rrbracket$ to the cloud.
The cloud then calculates cosine distances $\llbracket d_{\cos}(e_k,e)\rrbracket=d_{\cos}(\llbracket e_k\rrbracket,e)$ in encrypted form between $\llbracket e_k\rrbracket$ and each document embedding $e$ related to $e_{k'}$.
Upon receiving these distances, the user decrypts them and sorts the results to find the indices of top $k$ documents related to the query.

\subsubsection{Safe Retrieval with Indices}
\label{subsubsec:module2_b}

We then carefully analyze whether these indices are safe to be directly sent to the cloud to retrieve the corresponding documents.

If the cloud receives these indices, it also means that it knows which $k$ documents are the closest to the user query.
The cloud can average these document embeddings to construct an embedding $\bar{e}$ which approximates the query embedding $e_k$.
Therefore, we should measure how close the two embeddings $e_k$ and $\bar{e}$ are.

\begin{theorem}
    \label{thm:index_privacy}
    Given a target query embedding $e_k$ and the mean embedding $\bar{e}$ of top $k$ relevant document embeddings, the mean angle $\omega$ between $e_k$ and $\bar{e}$ satisfies
    \begin{equation*}
        \tan\omega=\frac{\tan\alpha_k}{\sqrt{k}}
    \end{equation*}
    where $\alpha_k$ is calculated from \cref{lem:k_alpha_k}.
\end{theorem}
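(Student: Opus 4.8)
The plan is to pick coordinates aligned with $e_k$, decompose each of the top $k$ document embeddings into a component along $e_k$ and a component in the orthogonal hyperplane, and then track what averaging does to each part. By the rotational symmetry of the uniform distribution on the sphere we may take $e_k$ to be the north pole, so each document embedding $e_i$ among the top $k$ can be written as $e_i=\cos\theta_i\,e_k+\sin\theta_i\,u_i$, where $\theta_i\in[0,\alpha_k]$ is its polar angle (bounded by $\alpha_k$ because it is among the top $k$, by \cref{lem:k_alpha_k}) and $u_i$ is a unit vector in the $(n-1)$-dimensional subspace orthogonal to $e_k$, modelled as uniform on that subsphere and independent across $i$.

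Next I would average: $\bar e=\frac1k\sum_{i=1}^k e_i=\big(\frac1k\sum_i\cos\theta_i\big)e_k+\frac1k\sum_i\sin\theta_i\,u_i$, so the component along $e_k$ is $\bar e_\parallel=\frac1k\sum_i\cos\theta_i$ and the orthogonal component is $\bar e_\perp=\frac1k\sum_i\sin\theta_i\,u_i$, with $\tan\omega=\|\bar e_\perp\|/\bar e_\parallel$. Since the $u_i$ are independent with $\mathbb E[u_i]=0$ (uniformity on the subsphere) and $\|u_i\|=1$, the cross terms vanish in expectation and $\mathbb E\|\bar e_\perp\|^2=\frac1{k^2}\sum_i\sin^2\theta_i$. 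Treating each top-$k$ embedding as lying at the representative cap angle $\theta_i=\alpha_k$ — the same cap-boundary summary used elsewhere in this section — gives $\bar e_\parallel=\cos\alpha_k$ and $\mathbb E\|\bar e_\perp\|^2=\frac1k\sin^2\alpha_k$, hence $\tan\omega=\frac{\sin\alpha_k/\sqrt k}{\cos\alpha_k}=\frac{\tan\alpha_k}{\sqrt k}$, which is the claim.

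The main obstacle is making the ``mean angle'' precise and justifying the reduction to a single representative angle. Two points need care: (i) $\omega$ should be defined through the expected magnitude of the orthogonal fluctuation, i.e. via $\sqrt{\mathbb E\|\bar e_\perp\|^2}$ rather than as $\mathbb E[\angle(e_k,\bar e)]$, and I would state this explicitly; and (ii) the substitution $\theta_i\mapsto\alpha_k$ is an approximation that becomes exact only when the polar density $\propto\sin^{n-2}\theta$ on $[0,\alpha_k]$ concentrates near $\alpha_k$ — for a fully rigorous version I would instead keep $\mathbb E[\cos\theta]$ and $\mathbb E[\sin^2\theta]$ and show they agree with $\cos^2\alpha_k$ and $\sin^2\alpha_k$ to the relevant order in the high-dimensional, small-$\alpha_k$ regime the paper works in. The remaining work is just the routine second-moment computation for a sum of independent zero-mean vectors.
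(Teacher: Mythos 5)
Your proposal is correct and follows essentially the same route as the paper: the paper also idealizes all top-$k$ embeddings as sitting at polar angle exactly $\alpha_k$, so that they are uniform on an $(n-1)$-sphere of radius $\sin\alpha_k$ in the hyperplane orthogonal to $e_k$, and then invokes a separate lemma showing the mean of $k$ such points lies at (root-mean-square) distance $\sin\alpha_k/\sqrt{k}$ from that sphere's center, giving $\tan\omega=\frac{\sin\alpha_k/\sqrt{k}}{\cos\alpha_k}$. Your two caveats — that $\omega$ is really defined via $\sqrt{\mathbb{E}\|\bar e_\perp\|^2}$ rather than $\mathbb{E}[\angle(e_k,\bar e)]$, and that $\theta_i\mapsto\alpha_k$ is a high-dimensional concentration approximation — are exactly the (implicit) approximations in the paper's own argument, so you have if anything been more explicit than the original.
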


From \cref{thm:index_privacy}, we characterize the approximation between $e_k$ and $\bar{e}$ with the mean angle $\omega$.
Recall that the privacy budget $\epsilon$ generates a perturbation with a mean of $\frac{n}{\epsilon}$ (i.e.\ $\Delta\alpha_k\approx\bar{r}=\frac{n}{\epsilon}$), as described in \cref{subsubsec:module1_generation}.
Thus, if $\omega\ge\Delta\alpha_k$, $\bar{e}$ is within the control of the privacy budget $\epsilon$ and the indices require no extra protection.

\phantomsection
\label{subsubsec:module2_c}
\textbf{\boldmath Optional $k$-out-of-$k'$ oblivious transfer (OT) for safe retrieval.}
If $\omega<\Delta\alpha_k$, $\bar{e}$ is even closer to the query embedding $e_k$ than the perturbed embedding $e_{k'}$, therefore requiring further protection.
In this situation, we suggest using the $k$-out-of-$k'$ OT protocol \citep{DBLP_conf_latincrypt_ChouO15}, which allows a sender (the cloud) with a set of $k'$ messages (documents) to transfer a subset of $k$ messages to a receiver (the user) while remaining oblivious to the specific subset (indices) chosen by the receiver.
The OT protocol is described in \cref{subsec:ot}.

\begin{table*}[t]
      \caption{Comparison among \ours, privacy-ignorant and privacy-conscious services.}
      \label{tab:special_cases}
      \centering
      \begin{tabular}{@{}l@{ }lc@{}cccc@{}}
            \toprule
                                   &                         & Security                                   & & \multicolumn{3}{c}{Communication}                                         \\
                                                               \cmidrule(lr){3-4}                             \cmidrule(l){5-7}
                                   &                         & User Query                                 & & Rounds               & Numbers ($\beta$ units) & Documents ($\eta$ units) \\
            \midrule
            \multicolumn{2}{@{}l}{Privacy-ignorant Service}  & \textcolor[rgb]{0.7,0.15,0.15}{\xmark}     & & $1$                  & $n$                     & $k$                      \\
            \multicolumn{2}{@{}l}{Privacy-conscious Service} & \textcolor[rgb]{0,0.5,0}{\cmark}           & & $2$                  & $n+2N+1$                & $N$                      \\
            \\[-12pt]\hdashline\\[-12pt]
            \multirow{2}{*}{\ours} & (Direct)                & \multirow{2}{*}{$(n,\epsilon)$-DistanceDP} & & \multirow{2}{*}{$2$} & $2n+k+k'+1$             & $k$                      \\
                                   & (OT)                    &                                            & &                      & $2(n+k'+1)$             & $k'$                     \\
            \bottomrule
      \end{tabular}
\end{table*}

\section{Analysis on \ours}

\subsection{Security Analysis}

\ours must adhere to the privacy-preserving goal outlined in the threat model in \cref{subsec:setup}.
The cloud receives the following messages which may leak the semantic meaning of the user query:
\begin{enumerate}[label=(\arabic*),leftmargin=*,noitemsep,topsep=0pt,wide=0pt]
    \item[\underline{Module \hyperref[subsec:module1]{1}.}] The cloud receives the perturbed embedding.
          The perturbation $r$ between the user embedding and the perturbed embedding is sampled from $\mathrm{Gamma}(n,\frac{1}{\epsilon})$, which satisfies $(n,\epsilon)$-DistanceDP, controlled by the privacy budget $\epsilon$.
    \item[\underline{Module \hyperref[subsubsec:module2_a]{2(a)}.}] The cloud receives the encrypted form of the query embedding.
           Without the secret key, the cloud cannot reverse engineer the query embedding.
          The computation of cosine distances is guaranteed by PHE, which does not leak any information either.
    \item[\underline{Module \hyperref[subsubsec:module2_b]{2(b)}, if $\omega\ge\Delta\alpha_k$.}] The cloud receives the indices of top $k$ relevant documents.
          The perturbation from the mean embedding of top $k$ relevant document embeddings is within the protection scope of the privacy budget $\epsilon$.
    \item[\underline{Module \hyperref[subsubsec:module2_c]{2(c)}, if $\omega<\Delta\alpha_k$.}] The cloud and the user perform the $k$-out-of-$k'$ OT protocol.
          The property of the OT protocol ensures that the indices of top $k$ relevant documents are not visible to the cloud.
\end{enumerate}

From the analysis above, we demonstrate that the user receives top $k$ documents without disclosing any information about the user query under the constraint of the given privacy budget $\epsilon$.

\subsection{Communication Analysis}

We analyze communication from two aspects: the number of communication rounds and the size of communication.
We define one communication round as the transmission of a message from one side to another and back to the original side.
The size of one number and one document are set to be $\beta$ units and $\eta$ units, respectively.
\begin{enumerate}[label=(\arabic*),leftmargin=*,noitemsep,topsep=0pt,wide=0pt]
    \item[\underline{Module \hyperref[subsec:module1]{1}.}]
    There is one message transmitted from the user to the cloud ($0.5$ communication round), containing the perturbed embedding $e_{k'}$ and the corresponding $k'$.
    $e_{k'}$ is a vector of length $n$ with $n\beta$ units, while $k'$ is a number occupying $\beta$ units.
    \item[\underline{Module \hyperref[subsubsec:module2_a]{2(a)}.}]
    There is $1$ communication round.
    First, the encrypted form $\llbracket e_k\rrbracket$ of the query embedding is sent to the cloud, occupying $n\beta$ units.
    Second, the cloud sends back encrypted cosine distances, occupying $k'\beta$ units.
    \item[\underline{Module \hyperref[subsubsec:module2_b]{2(b)}, if $\omega\ge\Delta\alpha_k$.}]
    There is $1$ communication round.
    The user sends the indices ($k\beta$ units) to the cloud and the cloud returns the target documents ($k\eta$ units).
    \item[\underline{Module \hyperref[subsubsec:module2_c]{2(c)}, if $\omega<\Delta\alpha_k$.}]
    There are $1.5$ communication rounds and $(k'+1)\beta+k'\eta$ units of messages for the $k$-out-of-$k'$ OT protocol.
    Details can be found in \cref{subsec:ot}.
\end{enumerate}

By summing these, if $\omega\ge\Delta\alpha_k$, the total number of communication rounds is $2.5$, and the size of communication is $(2n+k+k'+1)\beta+k\eta$ units; if $\omega<\Delta\alpha_k$, the total number of communication rounds is $3$, and the size of communication is $2(n+k'+1)\beta+k'\eta$ units.

\textbf{Practical optimization.}
The number of communication rounds can be further reduced in practice.
For example, the user can simultaneously send both the perturbed embedding and the encrypted form of the query embedding to the cloud in a single communication to reduce $0.5$ round for modules \hyperref[subsec:module1]{1} and \hyperref[subsubsec:module2_a]{2(a)}.
Additionally, the cloud can send the encrypted cosine distances and start the OT protocol together to reduce another $0.5$ round for modules \hyperref[subsubsec:module2_a]{2(a)} and \hyperref[subsubsec:module2_c]{2(c)}.
Therefore, no matter whether with module \hyperref[subsubsec:module2_b]{2(b)} or \hyperref[subsubsec:module2_c]{2(c)}, the total number of communication rounds can be further reduced to $2$.

\subsection{Special Cases}
\label{subsec:special}

There are two special cases by varying the privacy budget $\epsilon$, which serve as two baselines in our experiments:

\textbf{The privacy-ignorant cloud RAG service.}
A privacy-ignorant cloud RAG service does not account for user query privacy, requiring the user to upload the query embedding and receive top $k$ documents directly.
This represents a special case of \ours, achieved by setting $\epsilon\to\infty$, with the perturbation $r\sim\mathrm{Gamma}(n,0)$ (i.e. no perturbation).
The service requires $1$ communication round with $n\beta+k\eta$ units in this case.

\textbf{The privacy-conscious cloud RAG service.}
A privacy-conscious cloud RAG service aims to fully protect user query privacy.
This can be regarded as the combination of modules \hyperref[subsubsec:module2_b]{2(a)} and \hyperref[subsubsec:module2_b]{2(c)} in \ours, where $k'=N$.
This is another special case of \ours, achieved by setting $\epsilon\to0$, with the perturbation $r\sim\mathrm{Gamma}(n,\infty)$ and $k'=N$ (i.e. cryptographic computation over all $N$ documents).
This case requires $2$ communication rounds with $(n+2N+1)\beta+N\eta$ units.

The comparison between \ours and these two special cases is presented in \cref{tab:special_cases}.

\section{Experiments}

We evaluate \ours under various settings.
Our key findings are as follows:
\begin{itemize}[label=$\triangleright$,leftmargin=*,noitemsep,topsep=0pt,wide=0pt]
    \item For privacy, \ours controls the semantic information leakage of the user query with the privacy budget.
          \textbf{[\cref{subsec:privacy}]}
    \item For accuracy, \ours achieves lossless document retrieval.
          \textbf{[\cref{subsec:accuracy}]}
    \item For efficiency, \ours introduces little extra computation and communication costs while preserving privacy.
          \textbf{[\cref{subsec:efficiency}]}
\end{itemize}

\subsection{Experimental Setup}

\textbf{Datasets and vector database.}
For the RAG dataset, we employ MS MARCO \citep{DBLP_conf_nips_NguyenRSGTMD16}.
Document embeddings are stored in Chroma \citep{githubGitHubChromacorechroma}, serving as the vector database.

\textbf{Embedding models.}
We use a total of five embedding models to measure the impact of different dimensions, including three open-sourced embedding models all-MiniLM-L12-v2 (MiniLM) \citep{huggingfaceSentencetransformersallMiniLML12v2Hugging}, all-mpnet-base-v2 (MPNet) \citep{huggingfaceSentencetransformersallmpnetbasev2Hugging}, gtr-t5-base (T5) \citep{huggingfaceSentencetransformersgtrt5baseHugging}, and two OpenAI embedding models text-embedding-ada-002 (OpenAI-1) \citep{OpenAI1}, text-embedding-3-large (OpenAI-2) \citep{OpenAI2}.

\textbf{Baselines.}
Given that current RAG research does not yet address the privacy issue, we compare \ours with the privacy-ignorant and privacy-conscious services as described in \cref{subsec:special}.

\subsection{Privacy Study}
\label{subsec:privacy}

We first examine privacy leakage and control with the privacy budget in \ours.
We apply the attack method Vec2Text \citep{DBLP_conf_emnlp_MorrisKSR23} and use SacreBLEU \citep{DBLP_conf_wmt_Post18} to measure the difference between the original query and the sentence reconstructed from the query embedding.
The embedding model is T5.

From an intuitive perspective, we plot the SacreBLEU metric achieved by Vec2Text against the perturbation $r$ to see how the perturbation affects the attack.
From the results in \cref{fig:attack-r}, we observe that the attack performance drops from $50$ to $10$ as the perturbation increases from $0$ to $0.2$.
When the perturbation reaches $0.2$, which is relatively large, the attack becomes completely ineffective.
However, this also results in a very large $k'$, leading to unacceptable computation and communication costs, as described later in \cref{fig:efficiency}.

From another perspective, we analyze the variation in the attack performance against the privacy budget $\epsilon$ to help choose a proper privacy budget.
The results are shown in \cref{fig:attack-eps}.
Overall, the performance of the attack improves as $\epsilon$ increases.
This is within our expectation, since a larger privacy budget means a looser tolerance for privacy leakage, which allows for a smaller perturbation and ultimately leads to a better attack performance back in \cref{fig:attack-r}.
Additionally, we find that the trend change is not linear: the change is sensitive when the value of $\epsilon$ is smaller than $50000$ but the attack performance is almost the same when the value of $\epsilon$ is larger than $100000$.
This is due to the relationship between $r$ and $\epsilon$:
Since $r\sim\mathrm{Gamma}(n,\frac{1}{\epsilon})$, it means $\bar{r}=\frac{n}{\epsilon}$, where $n=768$ (T5).
The perturbation quickly drops from $0.2$ to $0.016$ as $\epsilon$ varies from $3000$ to $50000$, and the attack performance improves from $10$ to $40$ in this range.

\begin{figure}[t]
    \centering
    \begin{subcaptionblock}[t]{0.5\columnwidth}
        \centering
        \includegraphics[width=\textwidth]{./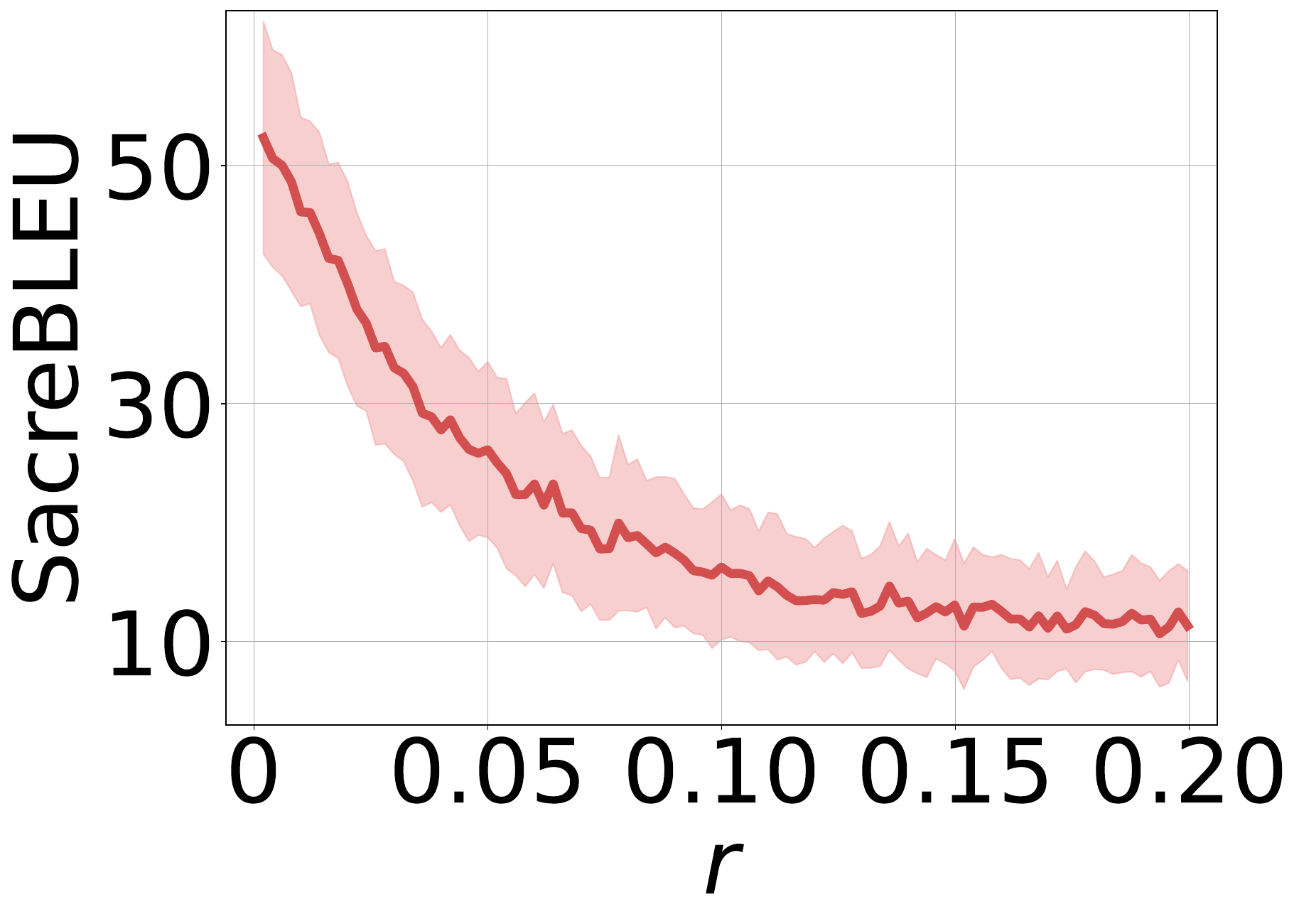}
        \caption{SacreBLEU-$r$}
        \label{fig:attack-r}
    \end{subcaptionblock}%
    \hfill%
    \begin{subcaptionblock}[t]{0.5\columnwidth}
        \centering
        \includegraphics[width=\textwidth]{./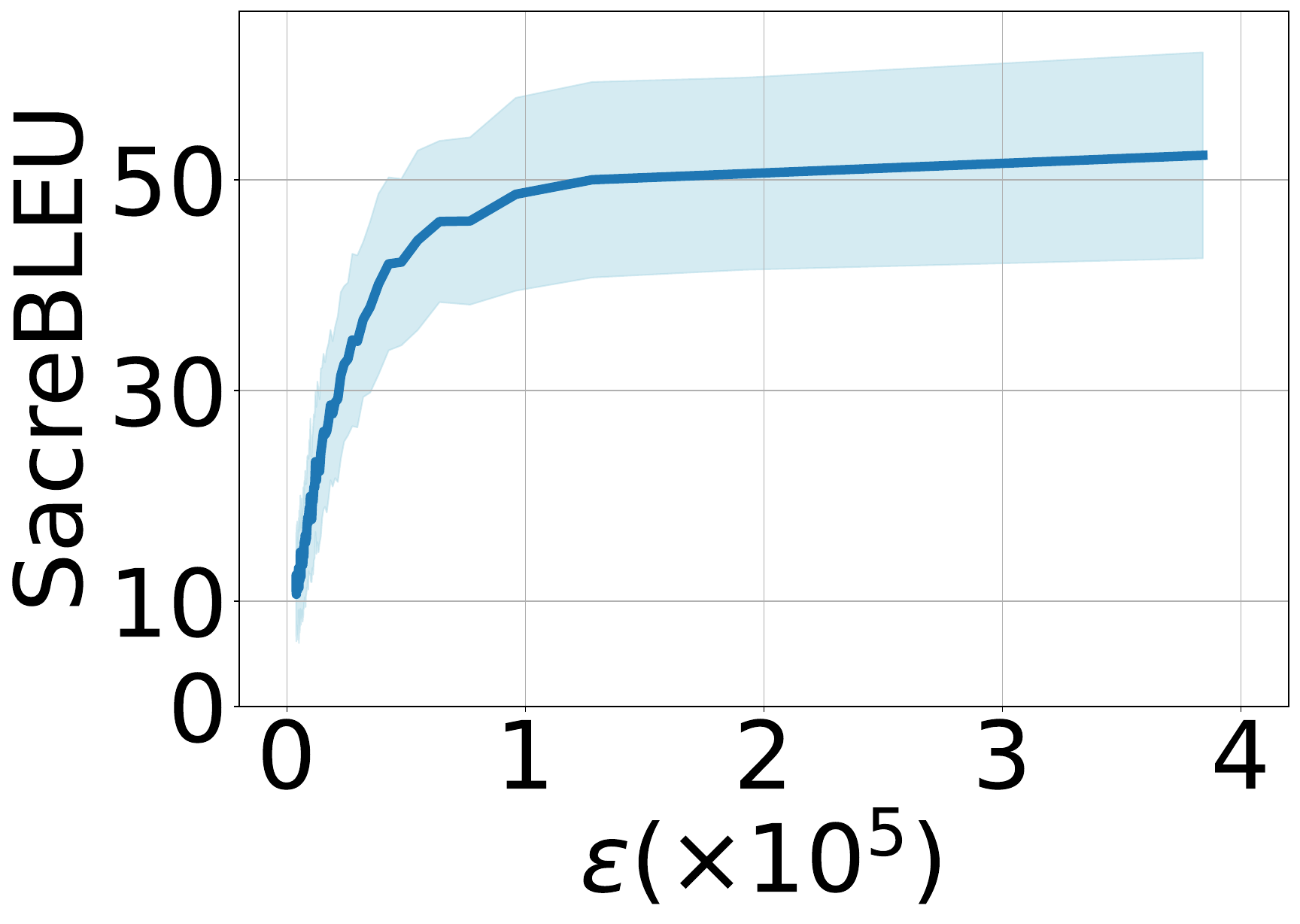}
        \caption{SacreBLEU-$\epsilon$}
        \label{fig:attack-eps}
    \end{subcaptionblock}%
    \caption{The attack performance corresponding to the perturbation $r$ and the privacy budget $\epsilon$.}
    \label{fig:attack}
\end{figure}

\begin{table*}[t]
    \caption{\ours achieves no loss in retrieval under various settings in our experiments.}
    \label{tab:acc}
    \centering
    \begin{tabular}{@{}cc@{}cc@{}cc@{}cc@{}}
        \toprule
                    & $N$                  & & $k$                & & $r$                        & & Embedding Model        \\
                      \cmidrule(lr){2-3}       \cmidrule(lr){4-5}     \cmidrule(lr){6-7}             \cmidrule(l){8-8}
                    & $10^4$/$10^5$/$10^6$ & & $5$/$10$/$15$/$20$ & & $0.03$/$0.05$/$0.07$/$0.1$ & & MiniLM/MPNet/T5/OpenAI-1/OpenAI-2 \\
        \midrule
        Recall & $100\%$              & & $100\%$            & & $100\%$                    & & $100\%$                \\
        \bottomrule
    \end{tabular}
\end{table*}

\subsection{Accuracy Study}
\label{subsec:accuracy}

To demonstrate the correctness of theoretical analysis for the calculation of $k'$, we conduct experiments under various settings:
different total numbers $N$ of documents, different numbers $k$ of top relevant documents, different sizes $r$ of the perturbation chosen by the user, and different embedding models.
We use recall to evaluate the proportion of top $k$ documents included in the results.

Throughout our experiment, we have not encountered a situation where any of the top $k$ documents are missing from the set of $k'$ documents.
As shown in \cref{tab:acc}, recall in all settings is $100\%$, indicating that all top $k$ documents are included in the set of $k'$ documents computed in module 1 and therefore can be correctly selected by module 2.

\begin{figure*}[t]
    \centering
    \begin{subcaptionblock}[t]{\textwidth}
        \centering
        \includegraphics[width=0.8\columnwidth]{./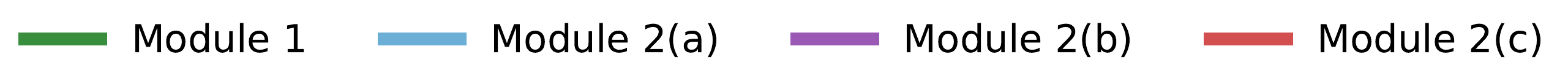}
    \end{subcaptionblock}\\
    \begin{subcaptionblock}[t]{0.33\textwidth}
        \centering
        \includegraphics[width=0.95\columnwidth]{./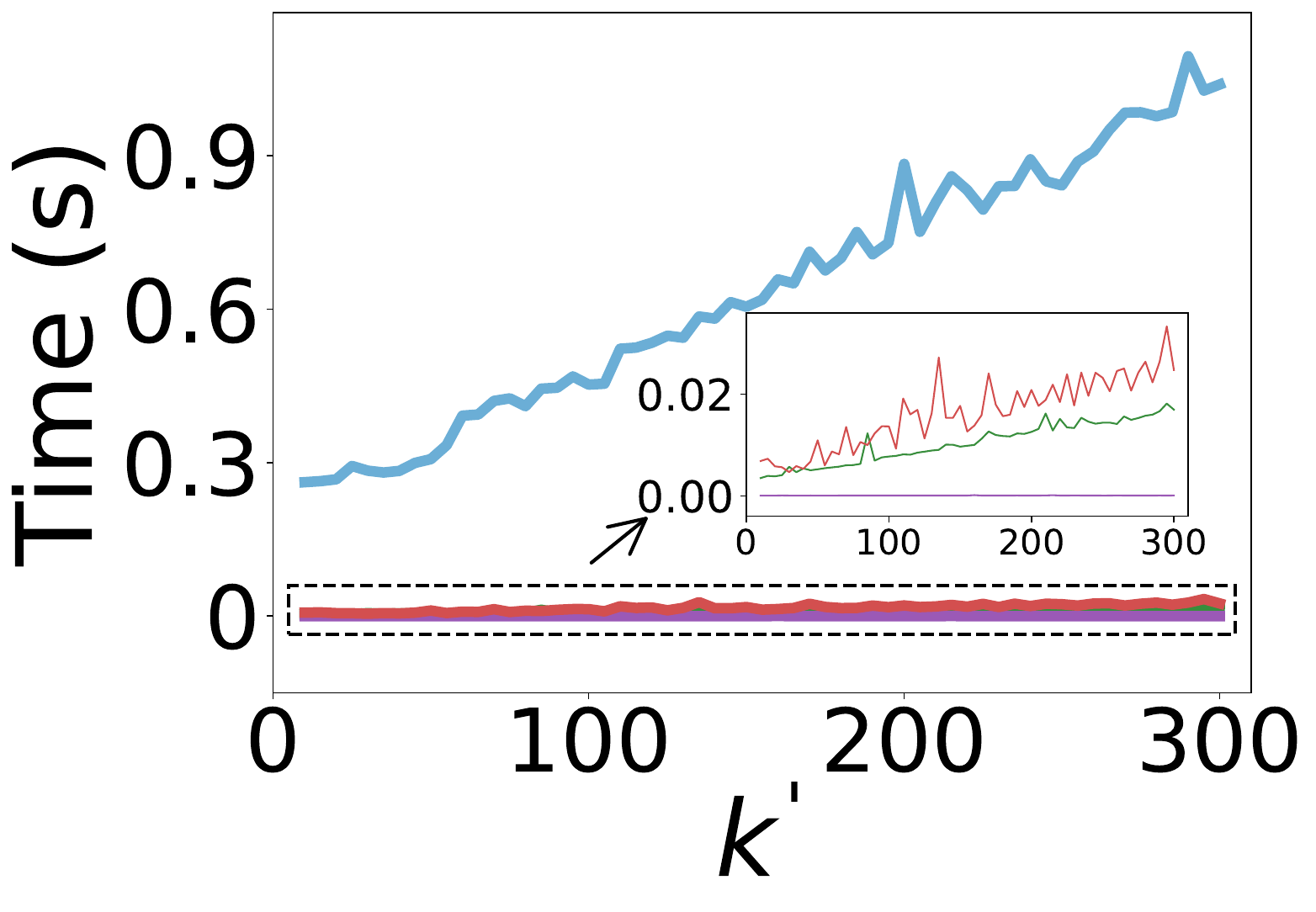}
        \caption{Computation cost.}
        \label{fig:computation}
    \end{subcaptionblock}%
    \hfill%
    \begin{subcaptionblock}[t]{0.33\textwidth}
        \centering
        \includegraphics[width=0.95\columnwidth]{./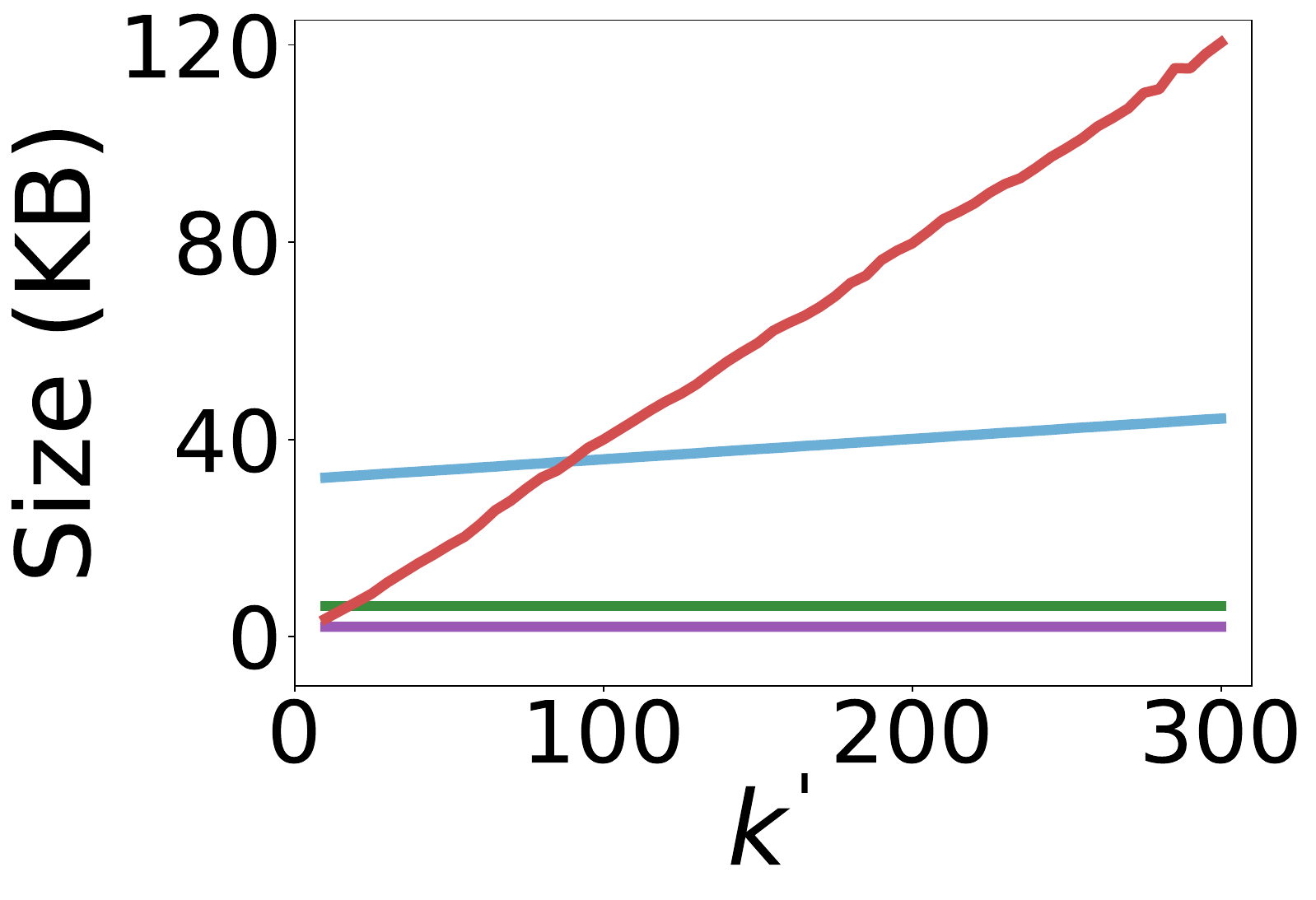}
        \caption{Communication cost.}
        \label{fig:communication}
    \end{subcaptionblock}%
    \hfill%
    \begin{subcaptionblock}[t]{0.33\textwidth}
        \centering
        \includegraphics[width=0.95\columnwidth]{./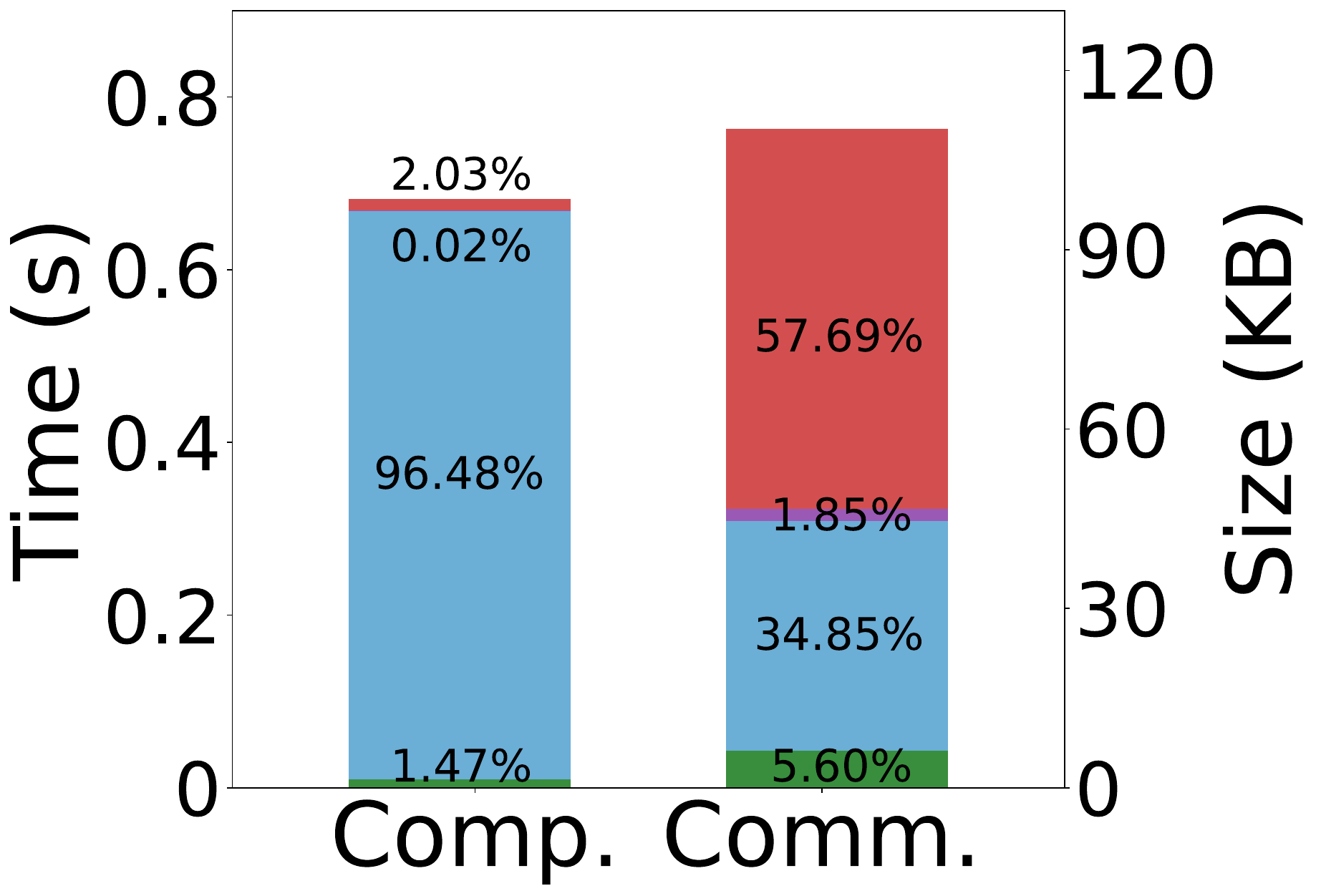}
        \caption{Cost breakdown ($k'=160$).}
        \label{fig:breakdown}
    \end{subcaptionblock}%
    \caption{Efficiency study of each module.}
    \label{fig:efficiency}
\end{figure*}

\subsection{Efficiency Study}
\label{subsec:efficiency}

For efficiency, we measure both computation and communication costs as a function of $k'$ to observe the impact of the perturbation on the costs.
The metrics are running time for computation cost and transmission size for communication cost.
We fix $k=5$ and choose T5 again as the embedding model.
We provide the results of each module in \cref{fig:efficiency}.
The linear results are consistent with the analysis in \cref{tab:special_cases}.
We highlight $k'=160,r=0.03$ in \cref{fig:breakdown} with the attack performance moderate at around $30$ shown in \cref{fig:attack}, and compare the results with baselines in \cref{tab:baseline}.

\textbf{Computation cost.}
From the results in \cref{fig:computation}, the most computationally intensive task occurs in module 2, which accounts for over $95\%$ of the total computation cost.
This substantial computational expense renders it impractical for scenarios involving a large number of documents when calculating cosine distances.
As indicated in \cref{tab:baseline}, the privacy-conscious service requires $2.72$ hours in total to process a single user request, which is considered unacceptable.

\textbf{Communication cost.}
From the results in \cref{fig:communication}, module 2(a) has a larger starting point, but the transmission size of module 2(c) soon surpasses module 2(a) as $k'$ increases.
Numerous basic parameters in PHE cause the former while the latter is due to the larger size of encrypted documents.
When $k'$ is relatively large, the transmission size becomes unacceptable.
As shown in \cref{tab:baseline}, the privacy-conscious service incurs a considerable transmission size ($1.43$GB) of $N$ documents.

\textbf{Direct and OT.}
At last, we compare the results using module 2(b) and module 2(c) in \cref{tab:baseline}.
We find little increase in computation cost but large increase in communication cost.
The cost breakdown illustrated in \cref{fig:breakdown} provides an intuitive distribution of costs.
OT does not bring much computation cost (from $0.02\%$ to $2.03\%$ compared to Direct), but the necessity to transfer $k'$ encrypted documents increases its transmission size from $1.85\%$ to $57.69\%$.

\section{Related Work}

RAG has garnered significant attention with the development of LLM.
Numerous researchers and companies \citep{DBLP_journals_corr_abs_2309_12871,voyageaiEmbeddings,DBLP_journals_corr_abs_2402_03216,OpenAI1,OpenAI2} are dedicated to training improved embedding models, aiming to group documents with similar semantic information more effectively.
Some \citep{llamaindexEvaluatingIdeal,langchainRecursivelySplit,towardsdatascienceAdvancedSmalltoBig} focus on chunking strategies to better split literature into documents at the appropriate locations and with the proper lengths.
Additionally, several works \citep{DBLP_conf_iclr_ZhouSHWS0SCBLC23,DBLP_journals_corr_abs_2309_11495,DBLP_journals_corr_abs_2305_14283} aim at query optimization to understand and transform queries for better retrieval results.

\textbf{These studies do not overlap with \ours.}
Instead, they can be directly applied to enhance the cloud RAG service or to transform the user query to improve the quality of RAG results.

\begin{table}[t]
    \caption{Efficiency comparison ($k'=160$).}
    \label{tab:baseline}
    \centering
    \begin{tabular}{@{}l@{}lr@{}lr@{}l@{}}
    \toprule
                           &                         & \multicolumn{2}{c}{Comp.}   & \multicolumn{2}{c@{}}{Comm.}  \\
    \midrule
    \multicolumn{2}{@{}l}{Privacy-ignorant Service}  & \texttt{3.15} & \texttt{ms} & \texttt{8.00}   & \texttt{KB} \\
    \multicolumn{2}{@{}l}{Privacy-conscious Service} & \texttt{2.72} & \texttt{hr} & \texttt{1.43}   & \texttt{GB} \\
    \\[-12pt]\hdashline\\[-12pt]
    \multirow{2}{*}{\ours} &(Direct)                 & \texttt{0.67} & \texttt{s}  & \texttt{46.66}  & \texttt{KB} \\
                           &(OT)                     & \texttt{0.68} & \texttt{s}  & \texttt{108.24} & \texttt{KB} \\
    \bottomrule
    \end{tabular}
\end{table}

\section{Conclusion}

In this paper, we are the first to address and formally define the privacy-preserving cloud RAG service problem.
We propose \ours as a solution regarding privacy, efficiency, and accuracy.
$(n,\epsilon)$-DistanceDP is introduced to characterize the privacy leakage of the user query.
The perturbation limits the search range, significantly saving computation and communication costs.
Theoretical analysis ensures the accuracy.
Experimental results also demonstrate the superiority of \ours in privacy, efficiency, and accuracy, compared to privacy-ignorant and privacy-conscious services.

\section*{Limitations}

\textbf{Limitation of PHE.}
PHE supports only the addition operation.
This restricts the variety of similarity distances \ours can calculate.
For example, FAISS offers to use Lp and Jaccard metrics, which may not be easy to use PHE.
Besides, RAG may also be combined with keyword searching for better retrieval results.
These require further investigation.

\textbf{Proprietary embedding model.}
Although open-source embedding models have already achieved great performance, the cloud may still consider using its own proprietary embedding model.
In this scenario, the user cannot calculate the query embedding locally and therefore cannot directly generate the perturbation for protection.
This remains a challenging problem.

\section*{Ethical Considerations}

\textbf{No disclosure risk.}
The privacy-preserving cloud RAG service is a new scenario proposed in this paper, which has not been formally used in practice.
\ours as the first solution to the potential privacy issues in this scenario can promote the development of this field with no disclosure risk.

\textbf{Open-sourced content in experiments.}
The open-sourced models and datasets used in our experiments are all downloaded from HuggingFace without modification.
We believe that using them appropriately according to their original purpose will not have a direct negative impact.

\textbf{Compliance with laws and regulations.}
\ours is proposed as a solution to potential privacy leakage in the privacy-preserving cloud RAG service, making it compliant with laws and regulations such as GDPR \citep{Voigt2017}.

% Bibliography entries for the entire Anthology, followed by custom entries
%\bibliography{anthology,custom}
% Custom bibliography entries only
\newpage
\bibliography{references.bib}

\appendix

\clearpage

\begin{algorithm*}[t]
    \caption{\ours: Module 1}
    \label{alg:module1}

    \DontPrintSemicolon

    \SetKwProg{mdone}{Module 1}{:}{}

    \mdone{}{
    \tcp{user side}
    generate a random perturbation $r\sim\mathrm{Gamma}(n,\frac{1}{\epsilon})$\;
    generate its direction vector $\mathbf{v}=\{v_1,\cdots,v_n\}$, where $v_i=\frac{t_i}{\sqrt{\sum_{j=1}^{n}{t_j}^2}},t_i\in\mathcal{N}(0,1),i\in[1,n]$\;
    compute the perturbed embedding $e_{k'}=e_k+r\mathbf{v}$\;
    determine $k'$ from \cref{thm:kpr_k}\;
    \tcp{cloud side}
    retrieve top $k'$ documents related to $e_{k'}$\;
    }
\end{algorithm*}

\begin{algorithm*}[t]
    \caption{\ours: Module 2}
    \label{alg:module2}

    \DontPrintSemicolon

    \SetKwProg{mdtwo}{Module 2}{:}{}

    \mdtwo{}{
    \tcp{user side}
    encrypt $\llbracket e_k\rrbracket$ using PHE\;
    \tcp{cloud side}
    \ForEach{$e\in$ \textnormal{document embeddings retrieved from \cref{alg:module1}}}{
    calculate cosine distance in encrypted form $\llbracket d_i\rrbracket=d_{\cos}(\llbracket e_k\rrbracket,e)$
    }
    \tcp{user side}
    decrypt cosine distances $d_i,i\in[1,k']$\;
    sort cosine distances to obtain the indices of top $k$ documents related to $e_k$\;
    \If{$\arctan\frac{\tan\alpha_k}{\sqrt{k}}\ge\frac{n}{\epsilon}$ \textnormal{(\cref{thm:index_privacy})}}{
        \tcp{cloud side}
        retrieve $k$ documents from the indices\;
    }
    \Else{
        retrieve $k$ documents from the $k$-out-of-$k'$ OT protocol\;
    }
    }
\end{algorithm*}

\section{More Details about \ours}

The detailed steps of \ours are shown in \cref{alg:module1,alg:module2}.

We did not emphasize the detail of the $k$-out-of-$k'$ OT protocol in module \hyperref[subsubsec:module2_c]{2(c)} in the main part, as it is not the primary contribution of this paper.
However, for privacy, efficiency, and accuracy, we provide the detail and analysis of the $k$-out-of-$k'$ OT protocol used in our experiments below.

\subsection{\texorpdfstring{\boldmath $k$-out-of-$k'$}{k-out-of-k'} Oblivious Transfer Protocol}
\label{subsec:ot}

We implement the $k$-out-of-$k'$ OT protocol based on \citet{DBLP_conf_latincrypt_ChouO15}.
Suppose the indices of target messages are $S=\{s_1,\cdots,s_k\}$, the protocol is as follows:

\begin{enumerate}[label=(\arabic*),leftmargin=*,noitemsep,topsep=0pt,wide=0pt]
    \item The cloud and the user share a hash function $\mathrm{Hash}$, a base number $g$ and a prime modulus $p$.
    \item The cloud selects a random number $a$, computes $A=g^a\bmod p$, and sends it to the user.
    \item The user computes $B_i=A^{c_i}\cdot g^{b_i}\bmod p,i\in[1,k']$, where $b_i$ are random numbers and $c_i=\left\{\begin{aligned}0,\  & i\in S\\1,\  & i\notin S\end{aligned}\right.$, and sends them to the cloud.
    \item The cloud constructs $k'$ secret keys $\mathrm{Key}_i=\mathrm{Hash}({B_i}^a\bmod p),i\in[1,k']$, uses them to encrypt messages $\llbracket m_i\rrbracket=\mathrm{Enc}(m_i,\mathrm{Key}_i),i\in[1,k']$, and sends these encrypted messages to the user.
    \item The user constructs $k$ secret keys $\mathrm{Key}_{s_j}=\mathrm{Hash}(A^{b_{s_j}}\bmod p),s_j\in S$, and can only decrypt target $k$ messages $m_{s_j}=\mathrm{Dec}(\llbracket m_{s_j}\rrbracket,\mathrm{Key}_{s_j})$.
\end{enumerate}

\textbf{Correctness.}
The objective is to ensure that keys used for encrypting and decrypting target messages are consistent between both sides, while keys corresponding to other messages remain inconsistent.
\begin{itemize}[label=$\triangleright$,leftmargin=*,noitemsep,topsep=0pt,wide=0pt]
\item For $i=s_j\in S,c_i=0$, the calculation of the key for $m_i$ on the cloud side is ${B_i}^a\equiv(A^{c_i}\cdot g^{b_i})^a\equiv g^{ab_i}\bmod p$.
Conversely, the calculation of the key for $m_{s_j}$ on the user side is $A^{b_{s_j}}\equiv g^{ab_{s_j}}\bmod p$.
This consistency in the calculation of the key on both sides enables the user to decrypt $m_{s_j}$.
\item For $i\notin S,c_i=1$, the calculation of the key for $m_i$ on the cloud side is ${B_i}^a\equiv(A^{c_i}\cdot g^{b_i})^a\equiv g^{a(a+b_i)}\bmod p$.
In contrast, the calculation of the key for $m_i$ on the user side is still $A^{b_i}\equiv g^{ab_i}\bmod p$, if the user insists on generating a key.
This inconsistency in the calculation of the key on both sides prevents the user from decrypting $m_{s_i}$.
\end{itemize}

\textbf{Security analysis.}
The cloud receives $B_i=g^{ac_i+b_i}\bmod p,i\in[1,k']$.
Since $b_i$ is a random number generated by the user, the cloud cannot derive whether $c_i=0$ or not by the given $B_i$.
Therefore, the cloud has no idea of which indices the user chooses.

\textbf{Communication analysis.}
There are $1.5$ rounds of communication.
First, the cloud sends a random number $A$, occupying $\beta$ units.
Second, the user sends $B_i,i\in[1,k']$ to the cloud, occupying $k'\beta$ units.
Third, the user receives encrypted messages (documents) from the cloud, occupying $k'\eta$ units.

\section{More Details in Experiments}

\subsection{Environment}

All of our experiments are conducted using PyTorch \citep{DBLP_conf_nips_PaszkeGMLBCKLGA19}, LangChain \citep{Chase_LangChain_2022} and HuggingFace \citep{DBLP_conf_emnlp_WolfDSCDMCRLFDS20} on an Ubuntu 22.04 server.
The server is equipped with two 28-core Intel(R) Xeon(R) Gold 5420+ processors and two Nvidia A40 48GB GPUs.

\subsection{Embedding Model Details}

\begin{table*}[t]
    \centering
    \caption{Patermeters of different embedding models in our experiments.}
    \label{tab:embedding_models}
    \begin{tabular}{@{}lccc@{}}
    \toprule
        Model                                                                          & License     & Model Size  & Dimension  \\
        \midrule
        all-MiniLM-L12-v2 \citep{huggingfaceSentencetransformersallMiniLML12v2Hugging} & Apache-2.0  & 33.4M       & 384 \\
        all-mpnet-base-v2 \citep{huggingfaceSentencetransformersallmpnetbasev2Hugging} & Apache-2.0  & 109M        & 768 \\
        gtr-t5-base \citep{huggingfaceSentencetransformersgtrt5baseHugging}            & Apache-2.0  & 110M        & 768 \\
        text-embedding-ada-002 \citep{OpenAI1}                                         & Proprietary & Proprietary & 1536 \\
        text-embedding-3-large \citep{OpenAI2}                                         & Proprietary & Proprietary & 3072 \\
         \bottomrule
    \end{tabular}
\end{table*}

We use three open-sourced embedding models and two OpenAI proprietary embedding models in our experiments.
The detailed parameters are shown in \cref{tab:embedding_models}.
OpenAI has not explicitly stated the license and model size for their embedding models.

\subsection{Dataset Details}

We use MS MARCO \citep{DBLP_conf_nips_NguyenRSGTMD16} v1.1 as the dataset in our experiments.
To be specific, We extract lists of \texttt{passage\_text} from the column \texttt{passages}.
We randomly choose $10^4$, $10^5$, and $10^6$ \texttt{passage\_text} to construct $15$ vector databases in total with five different embedding models.
For simplicity, we make no modifications to these passages such as splitting.

\subsection{Default Setting}

Unless otherwise specified, our experiments used the following hyperparameter settings:
$N=10^5$, $k=5$, $k'=160$, and T5 as the embedding model.
The experimental results shown in all figures and tables are the averages of 50 independent experiments.

\subsection{Extra Experimental Results}

\subsubsection{Relationships among Hyperparameters}

\textbf{\boldmath $k/N$-$\alpha_k$.}
We first plot the equation of \cref{lem:k_alpha_k}.
From \cref{fig:k-N-alpha_k}, the result of $k/N$ increases sharply as $\alpha_k$ approaches $90\degree$.
Additionally, when $n$ grows larger, the increase is even steeper.
This phenomenon is characteristic of high-dimensional space, where random vectors on the surface of the unit $n$-sphere tend to be almost perpendicular.
Consequently, a relatively small change in $\alpha_k$ results in a significant change in $k/N$, meaning that the perturbation greatly impacts $k'$, highlighting the importance of selecting a proper privacy budget.

\textbf{\boldmath $\epsilon$-$k'$.}
We discuss in \cref{subsec:module1} that in practice, apart from initially setting the privacy budget, we can also choose $k'$ first and then compute the corresponding privacy budget.
From \cref{fig:epsilon-kpr}, we observe that when $k'<50$, the change in $\epsilon$ is relatively large, which corresponds to a small perturbation and high attack performance according to \cref{fig:attack}.
The user should avoid considering $k'$ as well as the corresponding privacy budget in this range, since the protection is too weak, as shown in \cref{fig:attack}.
To avoid excessive computation and communication costs, an appropriate choice of $k'$ would be within the range of $[100,200]$.
Another observation is that a larger $\epsilon$ is required to preserve the same value of $k'$ for a larger value of $k$.
This can be explained by the fact that, as $k$ increases, the number of possible embeddings with the same top $k$ documents also increases.
Therefore, the same value of $k'$ implies a looser privacy requirement, which is reflected by a larger privacy budget $\epsilon$.

\begin{figure}[t]
    \centering
    \begin{subcaptionblock}{0.48\columnwidth}
        \centering
        \includegraphics[height=0.13\textheight]{./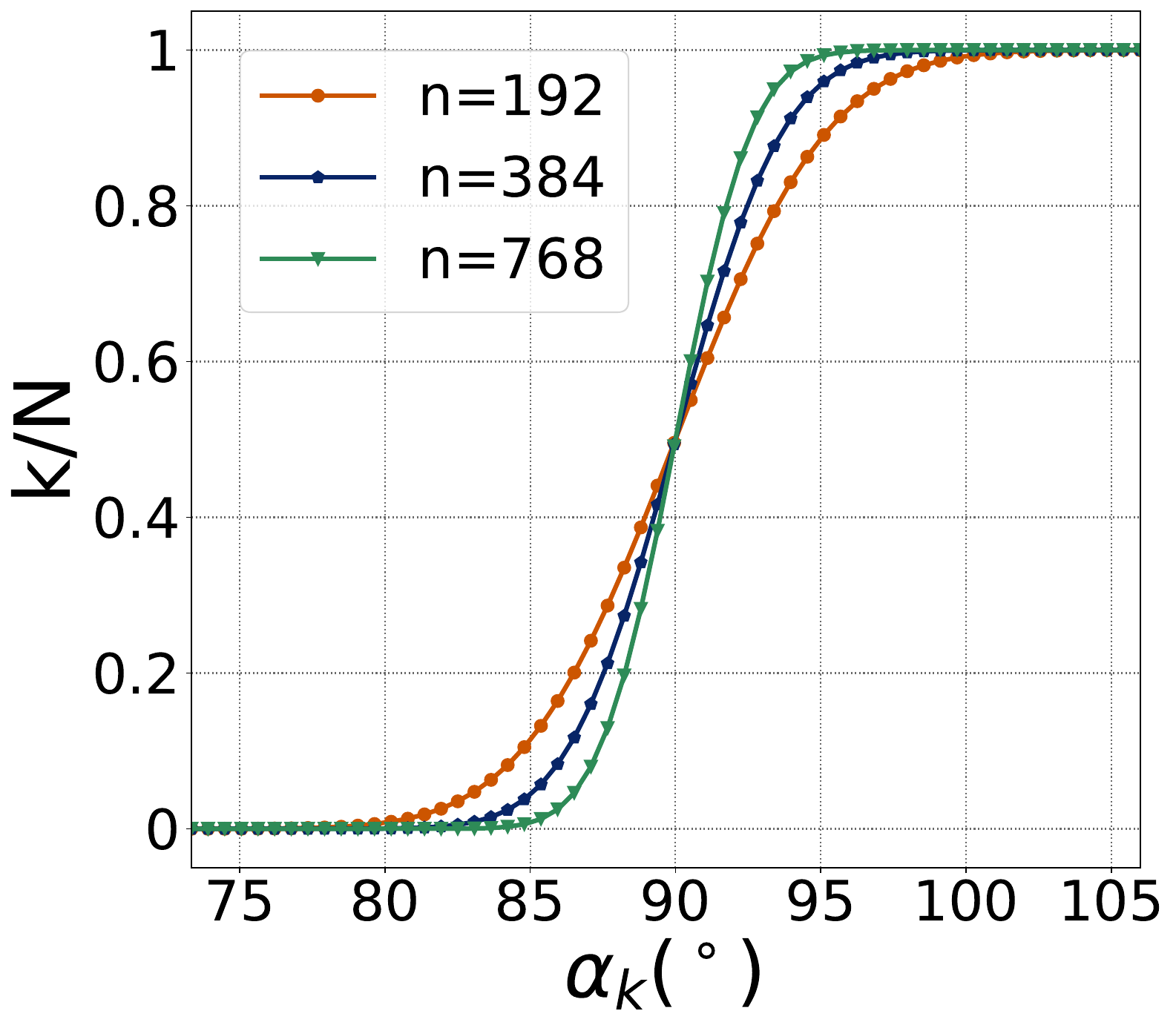}
        \caption{$k/N$-$\alpha_k$}
        \label{fig:k-N-alpha_k}
    \end{subcaptionblock}%
    \hfill%
    \begin{subcaptionblock}{0.48\columnwidth}
        \centering
        \includegraphics[height=0.13\textheight]{./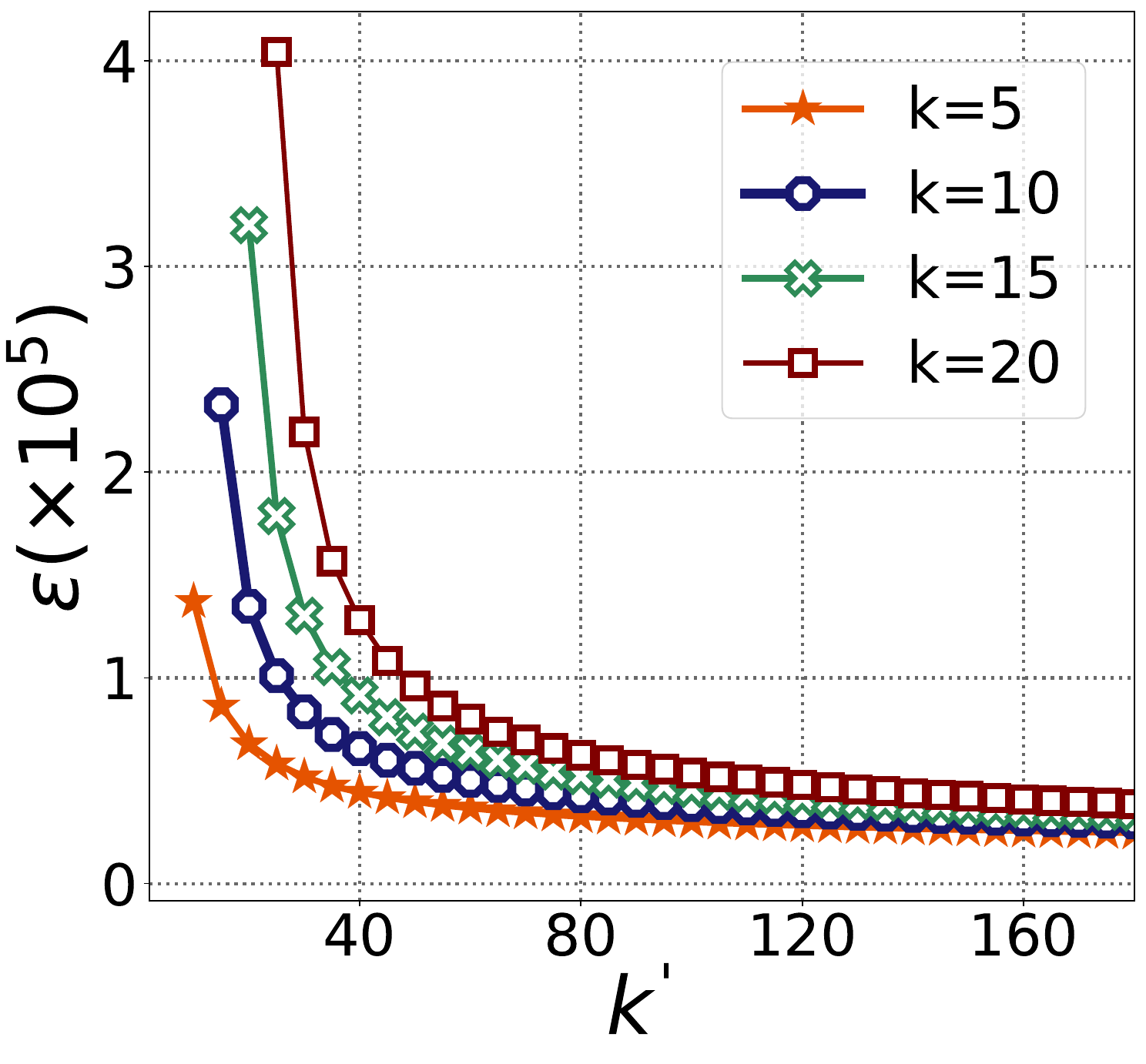}
        \caption{$\epsilon$-$k'$}
        \label{fig:epsilon-kpr}
    \end{subcaptionblock}
    \caption{Relationships among hyperparameters.}
    \label{fig:relationship}
\end{figure}

\subsubsection{2D Simulations}

The following two 2D simulations are based on some artificial data for ease of understanding.

\textbf{\boldmath Visualization of Top $k$.}
To provide a clear view of how \ours works, we give a visualization of the relationship between the embeddings of the top $k$ documents and the selected $k'$ documents in $2$-dimensional space.
The plot in \cref{fig:visualization} clearly shows that the top $k$ documents are included in the set of $k'$ documents, demonstrating the correctness of \ours.

\textbf{Rare exceptions.}
Although we have not experienced any loss in retrieving documents, we acknowledge that in some rare exceptions, there might be a chance of \ours failing to preserve the top $k$ documents.
We provide one such exception here.
As illustrated in \cref{fig:exception}, the top $k$ documents related to the query embedding are all located on the left side of the query embedding at the same angle $\alpha_k$.
The perturbed embedding is positioned on the right side of the query embedding at angle $\Delta\alpha_k$.
If there are $k'$ documents located exactly within the range of angles $[\alpha_k,\alpha_k+2\Delta\alpha_k]$, then the top $k'$ documents related to $e_{k'}$ would not include the top $k$ documents related to $e_k$.
However, exceptions like this only occur when the distribution of document embeddings is extremely non-uniform, a rare scenario in high-dimensional space.

\begin{figure}[t]
    \centering
    \begin{subcaptionblock}{0.48\columnwidth}
        \centering
        \includegraphics[height=0.13\textheight]{./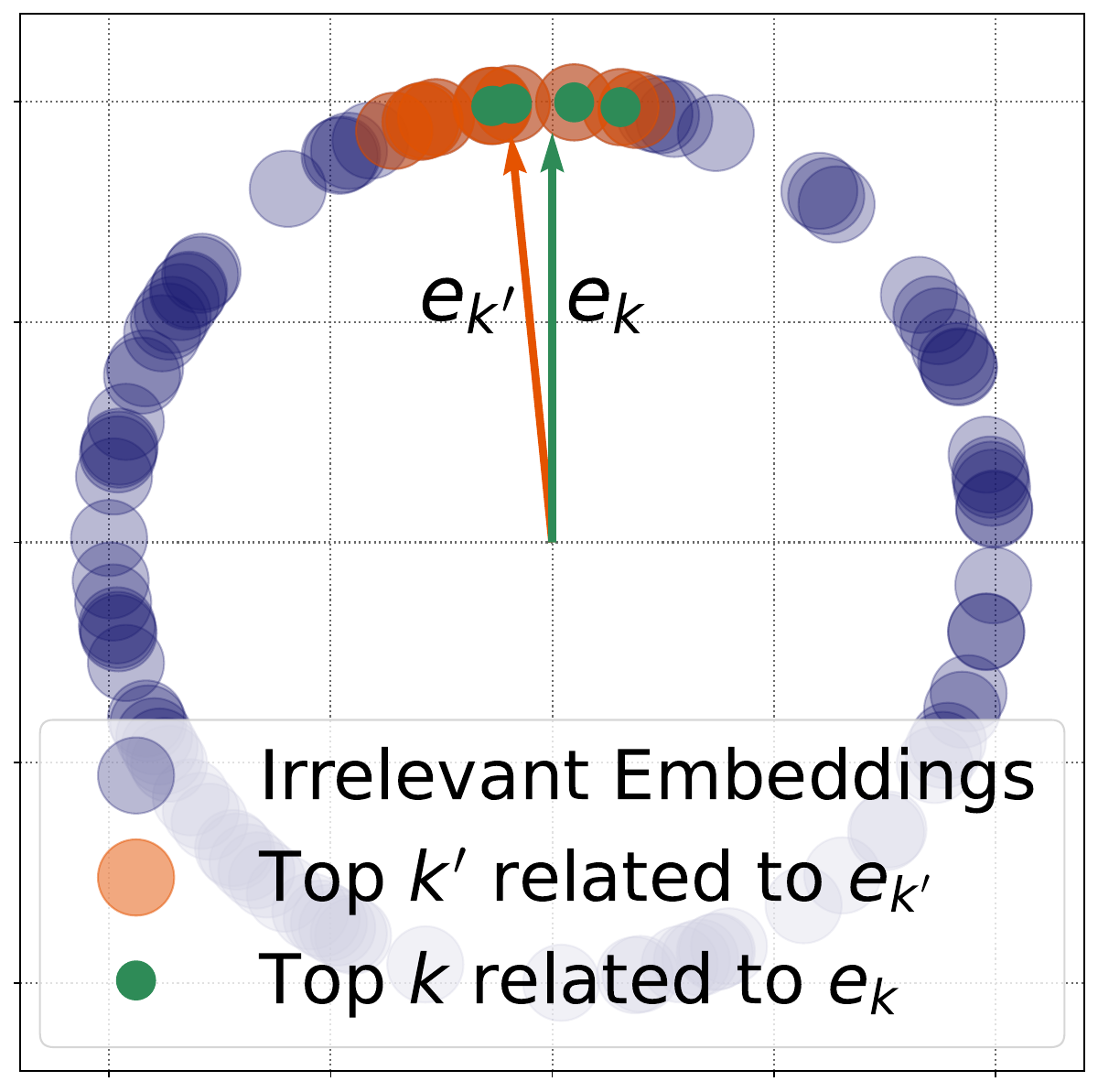}
        \caption{Visualization of top $k$.}
        \label{fig:visualization}
    \end{subcaptionblock}%
    \hfill%
    \begin{subcaptionblock}{0.48\columnwidth}
        \centering
        \includegraphics[height=0.13\textheight]{./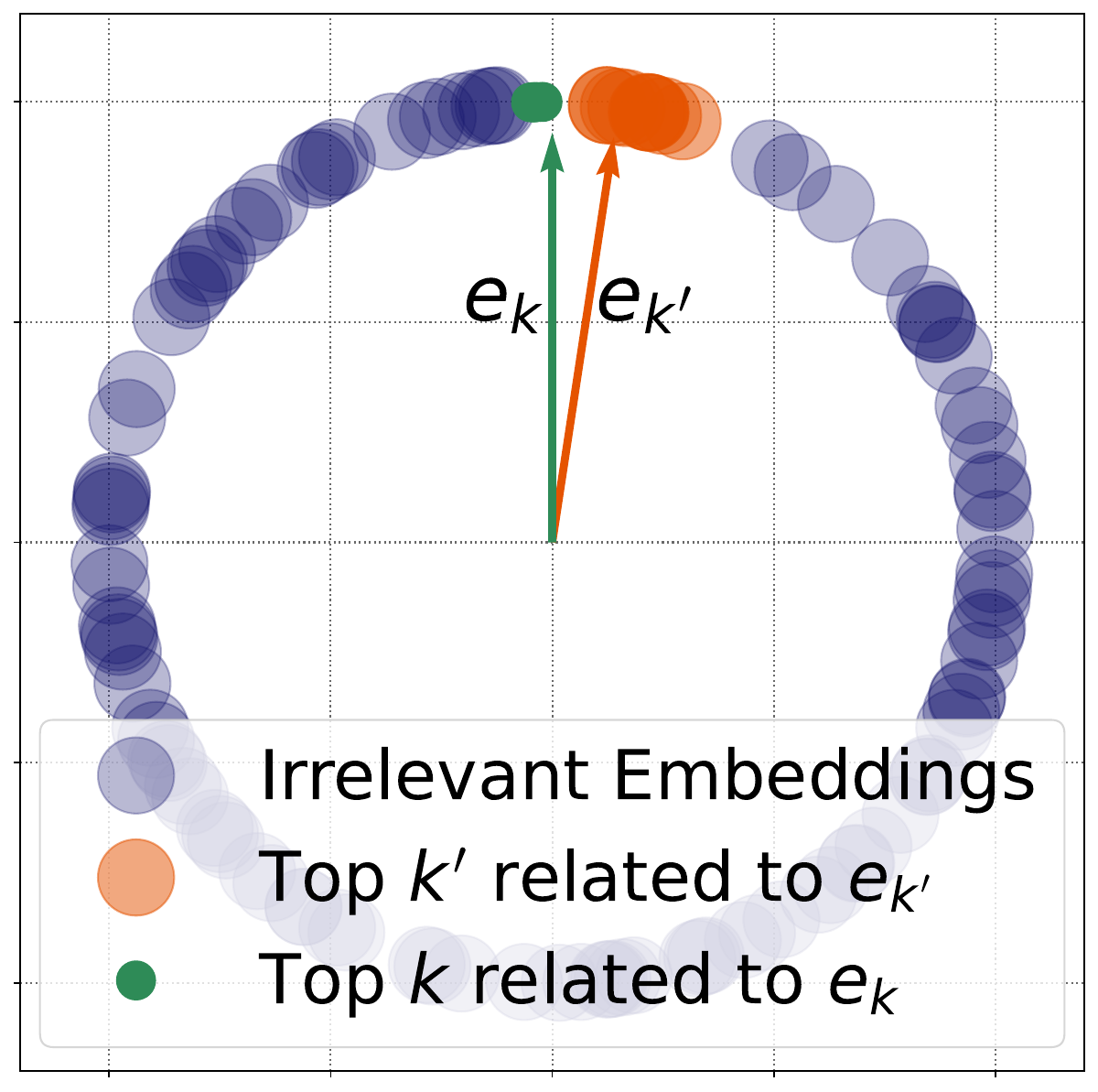}
        \caption{One rare exception.}
        \label{fig:exception}
    \end{subcaptionblock}%
    \caption{Illustrations in $2$-dimensional space.}
    \label{fig:acc}
\end{figure}

\section{Proofs}

\begin{lemma}[Repeated from \cref{lem:k_alpha_k}]
    \label{lem:k_alpha_k_repeated}
    Assume that there are $N$ embeddings uniformly distributed on the surface of the $n$-dimensional unit sphere.
    Let $\alpha_k$ be the polar angle of the surface area formed by top $k$ embeddings related to any given embedding.
    Then, $k$ and $\alpha_k$ satisfy the following relationship:
    \[k=N\cdot\frac{\Omega_{n-1}(\pi)}{\Omega_{n}(\pi)}\cdot\int_{0}^{\alpha_{k}}\sin^{n-2}\theta\,\mathrm{d}\theta\]
    where $\Omega_{n}(\pi)=\frac{2\pi^\frac{n}{2}}{\Gamma(\frac{n}{2})}$ represents the surface area of the unit $n$-sphere.
\end{lemma}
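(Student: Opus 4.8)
The plan is to compute $k$ as the expected number of the $N$ embeddings that fall within the spherical cap of polar angle $\alpha_k$ centered at the given reference embedding, and then express that cap's measure as a fraction of the total sphere surface. Since the $N$ points are i.i.d. uniform on $S^{n-1}$, by linearity of expectation we have $k = N \cdot \frac{A_{\mathrm{cap}}(\alpha_k)}{\Omega_n(\pi)}$, where $A_{\mathrm{cap}}(\alpha_k)$ is the surface area of the cap $\{x \in S^{n-1} : \text{angle}(x, x_0) \le \alpha_k\}$. So the entire task reduces to computing the cap area as an integral over the polar angle.

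First I would set up hyperspherical coordinates on $S^{n-1}$ with the reference embedding $x_0$ as the north pole, writing a point by its polar angle $\theta \in [0,\pi]$ from $x_0$ together with a direction on the ``latitude sphere'' $S^{n-2}$. The standard area element in these coordinates factors as $\mathrm{d}A = \sin^{n-2}\theta \, \mathrm{d}\theta \, \mathrm{d}A_{S^{n-2}}$, where $\mathrm{d}A_{S^{n-2}}$ is the area element of the unit $(n-2)$-sphere. Integrating the latitude sphere contributes a total factor of $\Omega_{n-1}(\pi)$ (the surface area of $S^{n-2}$, matching the paper's indexing convention $\Omega_m(\pi) = 2\pi^{m/2}/\Gamma(m/2)$), so $A_{\mathrm{cap}}(\alpha_k) = \Omega_{n-1}(\pi) \int_0^{\alpha_k} \sin^{n-2}\theta \, \mathrm{d}\theta$. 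Substituting into the expectation formula gives exactly
\[
k = N \cdot \frac{\Omega_{n-1}(\pi)}{\Omega_n(\pi)} \cdot \int_0^{\alpha_k} \sin^{n-2}\theta \, \mathrm{d}\theta,
\]
which is the claimed identity. As a sanity check, taking $\alpha_k = \pi$ should recover $k = N$, i.e.\ $\Omega_{n-1}(\pi) \int_0^\pi \sin^{n-2}\theta\,\mathrm{d}\theta = \Omega_n(\pi)$; this is the classical recursion for sphere surface areas (the integral equals $\sqrt{\pi}\,\Gamma(\frac{n-1}{2})/\Gamma(\frac{n}{2})$ by the Beta-function formula), confirming the normalization is consistent.

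The main obstacle — really the only nontrivial point — is justifying the factorization of the area element $\mathrm{d}A = \sin^{n-2}\theta \, \mathrm{d}\theta \, \mathrm{d}A_{S^{n-2}}$, i.e.\ the spherical analogue of the coarea/Fubini decomposition of $S^{n-1}$ into latitude spheres of radius $\sin\theta$. This is standard but deserves a line: the latitude set at polar angle $\theta$ is a scaled copy of $S^{n-2}$ with radius $\sin\theta$, hence $(n-2)$-dimensional area $\sin^{n-2}\theta \cdot \Omega_{n-1}(\pi)$, and the radial direction (varying $\theta$) is orthogonal to it with unit speed. A secondary subtlety worth noting is the slight abuse in speaking of ``the'' polar angle $\alpha_k$ of the region formed by the top $k$ embeddings: for finite $N$ this is a random quantity, so the identity should be read as the relation between $k$ and $\alpha_k$ in the continuum/expectation sense (equivalently, $\alpha_k$ is defined as the cap radius whose expected occupancy is $k$), which is the regime in which the subsequent results are applied. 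I would state this interpretation explicitly and then the proof is a one-paragraph coordinate computation.
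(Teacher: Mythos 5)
Your proposal is correct and follows essentially the same route as the paper: both decompose the spherical cap of polar angle $\alpha_k$ into latitude $(n-2)$-spheres of radius $\sin\theta$ to obtain $\Omega_n(\alpha_k)=\Omega_{n-1}(\pi)\int_0^{\alpha_k}\sin^{n-2}\theta\,\mathrm{d}\theta$, and then equate the fraction of points in the cap with the fraction of surface area under the uniformity assumption. Your added remarks (the $\alpha_k=\pi$ normalization check and the observation that the count--area relation holds in expectation for finite $N$) are sound refinements of the same argument rather than a different approach.
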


\begin{figure}[ht]
    \centering
    \includegraphics[width=0.9\columnwidth]{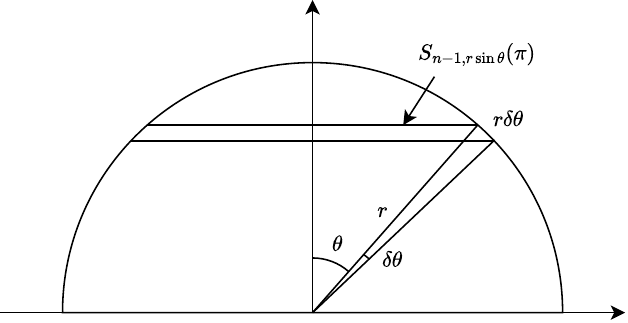}
    \caption{Illustration of the proof of \cref{lem:k_alpha_k_repeated}.}
    \label{fig:lem1proof}
\end{figure}

\begin{proof}
    Define $S_{n,r}(\alpha)=\Omega_n(\alpha)r^{n-1}$ as the surface area of the spherical sector with a polar angle $\alpha\in[0,\pi]$ in the $n$-sphere with radius $r$, where $\Omega_n(\alpha)$ represents the corresponding surface area in the unit $n$-sphere.
    Then, referring to \cref{fig:lem1proof}, we have
    \begin{align*}
        S_{n,r}(\alpha) & =\int_{0}^{\alpha}S_{n-1,r\sin\theta}(\pi)r\,\mathrm{d}\theta                \\
                        & =\int_{0}^{\alpha}\Omega_{n-1}(\pi)[r\sin\theta]^{n-2}r\,\mathrm{d}\theta    \\
                        & =r^{n-1}\int_{0}^{\alpha}\Omega_{n-1}(\pi)\sin^{n-2}\theta\,\mathrm{d}\theta
    \end{align*}

    Comparing to the definition of $S_{n,r}(\alpha)$, it is straightforward to derive that
    \[\Omega_n(\alpha)=\int_{0}^{\alpha}\Omega_{n-1}(\pi)\sin^{n-2}\theta\,\mathrm{d}\theta\]

    Assuming the embeddings are uniformly distributed on the surface,
    \[\frac{N}{\Omega_n(\pi)}=\frac{k}{\Omega_n(\alpha_k)}\]

    Therefore,
    \[k=N\cdot\frac{\Omega_n(\alpha_k)}{\Omega_n(\pi)}=N\cdot\frac{\Omega_{n-1}(\pi)}{\Omega_{n}(\pi)}\cdot\int_{0}^{\alpha_{k}}\sin^{n-2}\theta\,\mathrm{d}\theta\]
\end{proof}

\begin{theorem}[Repeated from \cref{thm:kpr_k}]
    Under the conditions specified in \cref{lem:k_alpha_k_repeated},
    given two embeddings $e_k$ and $e_{k'}$ with the perturbed angle $\Delta\alpha_k$, to ensure that top $k'$ embeddings related to $e_{k'}$ include top $k$ embeddings related to $e_k$, $k'$ and $k$ satisfy the following relationship:
    \[\Delta k=k'-k=N\cdot\frac{\Omega_{n-1}(\pi)}{\Omega_{n}(\pi)}\cdot\int_{\alpha_k}^{\alpha_{k'}}\sin^{n-2}\theta\,\mathrm{d}\theta\]
    where $\alpha_{k'}=\alpha_{k}+\Delta\alpha_k$.
\end{theorem}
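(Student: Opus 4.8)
The plan is to reduce \cref{thm:kpr_k} to \cref{lem:k_alpha_k_repeated} by identifying the worst-case geometric configuration. First I would observe that $k'$ must be chosen so that, no matter how the top $k$ embeddings of $e_k$ are arranged on the spherical cap of polar angle $\alpha_k$ around $e_k$, they all fall inside the spherical cap of polar angle $\alpha_{k'}$ around $e_{k'}$. Since $e_{k'}$ is obtained from $e_k$ by a rotation of angle $\Delta\alpha_k$, the triangle inequality for the geodesic (angular) metric on the sphere gives that any point within angle $\alpha_k$ of $e_k$ is within angle $\alpha_k+\Delta\alpha_k$ of $e_{k'}$. Hence the cap of polar angle $\alpha_{k'}=\alpha_k+\Delta\alpha_k$ around $e_{k'}$ is guaranteed to contain the entire cap of polar angle $\alpha_k$ around $e_k$, and therefore all top $k$ documents of $e_k$; this is exactly the oblique-projection picture of \cref{fig:thm1proof_global}. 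The extremal case $\alpha_{k'}=\alpha_k+\Delta\alpha_k$ is attained when $e_k$ lies on the boundary great-sphere between the two caps (\cref{fig:thm1proof_special}), so no smaller value of $\alpha_{k'}$ suffices in general.

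Next I would count the expected number of documents in the cap of polar angle $\alpha_{k'}$ around $e_{k'}$. Because the $N$ embeddings are uniformly distributed on the unit $n$-sphere, the argument of \cref{lem:k_alpha_k_repeated} applies verbatim to $e_{k'}$: the expected number of embeddings within polar angle $\alpha_{k'}$ of any fixed direction is
\[
k' = N\cdot\frac{\Omega_{n-1}(\pi)}{\Omega_n(\pi)}\cdot\int_0^{\alpha_{k'}}\sin^{n-2}\theta\,\mathrm{d}\theta .
\]
Subtracting the corresponding expression for $k$ from \cref{lem:k_alpha_k_repeated} and using additivity of the integral over $[0,\alpha_k]\cup[\alpha_k,\alpha_{k'}]$ yields
\[
\Delta k = k'-k = N\cdot\frac{\Omega_{n-1}(\pi)}{\Omega_n(\pi)}\cdot\int_{\alpha_k}^{\alpha_{k'}}\sin^{n-2}\theta\,\mathrm{d}\theta,
\]
which is the claimed relation. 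One should note $\alpha_k$ is first recovered from $k$ by inverting \cref{lem:k_alpha_k_repeated}, and $\Delta\alpha_k$ is taken equal to the perturbation radius $r=\|e_k-e_{k'}\|$ in the small-angle regime $r\ll 1$ (chord length $\approx$ arc length), as remarked before the theorem statement.

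The main obstacle is the geometric containment step rather than the counting step: one must argue that $\alpha_{k'}=\alpha_k+\Delta\alpha_k$ is both sufficient (triangle inequality on the geodesic metric, so every top-$k$ point of $e_k$ lies in the $e_{k'}$-cap regardless of where those points sit within the $e_k$-cap) and, in the worst case, necessary (the rare non-uniform configurations of \cref{fig:exception} show nothing smaller works without stronger distributional assumptions). The counting step is then a direct, almost mechanical, application of \cref{lem:k_alpha_k_repeated} to the perturbed center together with the uniformity hypothesis; it carries over because the derivation of $\Omega_n(\alpha)$ there never used a specific choice of pole.
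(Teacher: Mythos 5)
Your proposal is correct and follows essentially the same route as the paper: establish the containment via $\alpha_{k'}=\alpha_k+\Delta\alpha_k$ and then subtract two instances of \cref{lem:k_alpha_k_repeated}, using additivity of the integral. Your explicit geodesic triangle-inequality argument for the containment step is in fact more careful than the paper's appeal to \cref{fig:thm1proof_special} (which simply asserts the property ``can be readily extended to $n$-dimensional space''), but it is the same idea made precise rather than a different approach.
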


\begin{proof}
    From \cref{fig:thm1proof_special}, we observe that $\alpha_{k'}=\alpha_{k}+\Delta\alpha_k$.
    This property can be readily extended to $n$-dimensional space.
    And from \cref{lem:k_alpha_k_repeated},
    \begin{align*}
        k&=N\cdot\frac{\Omega_{n-1}(\pi)}{\Omega_{n}(\pi)}\cdot\int_{0}^{\alpha_{k}}\sin^{n-2}\theta\,\mathrm{d}\theta  \\
        k'&=N\cdot\frac{\Omega_{n-1}(\pi)}{\Omega_{n}(\pi)}\cdot\int_{0}^{\alpha_{k'}}\sin^{n-2}\theta\,\mathrm{d}\theta
    \end{align*}

    Thus,
    \[\Delta k=k'-k=N\cdot\frac{\Omega_{n-1}(\pi)}{\Omega_{n}(\pi)}\cdot\int_{\alpha_k}^{\alpha_{k'}}\sin^{n-2}\theta\,\mathrm{d}\theta\]
\end{proof}

\begin{theorem}[Repeated from \cref{thm:distance}]
    Given two normalized embeddings $e_a$ and $e_b$ of the same dimension, L2 distance and cosine distance have the following relationship:
    \[d_{l2}(e_a,e_b)=\sqrt{2d_{\cos}(e_a,e_b)}\]
\end{theorem}

\begin{proof}
    From \cref{def:cos_dis},
    \begin{align*}
        d_{l2}^2(e_a,e_b)&=\|e_a-e_b\|^2=\sum_{i=1}^{n}({e_a}_i-{e_b}_i)^2\\
        &=\sum_{i=1}^{n}{e_a}_i^2+\sum_{i=1}^{n}{e_b}_i^2-\sum_{i=1}^{n}2{e_a}_i{e_b}_i\\
        &=\|e_a\|^2+\|e_b\|^2-2\sum_{i=1}^{n}{e_a}_i{e_b}_i\\
        &=1+1-2\langle e_a,e_b\rangle\\
        &=2d_{\cos}(e_a,e_b)
    \end{align*}
\end{proof}

\begin{lemma}
    \label{lem:mean}
    $k$ points $p_1,\cdots,p_k$ are extracted from the uniform distribution on the surface of an $n$-dimensional sphere with radius $r$.
    Denote the mean of these points as $\overline{p}$.
    L2 distance $d$ between $\overline{p}$ and the center of the sphere has the expected value
    \[\mathbb{E}[d]=\frac{r}{\sqrt{k}}\]
\end{lemma}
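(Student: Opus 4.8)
The plan is to pass to the squared distance, compute its expectation exactly by a second-moment argument, and then invoke concentration in high dimension to recover the first moment. Place the center of the sphere at the origin, so that $d = \|\overline{p}\|$ with $\overline{p} = \frac{1}{k}\sum_{i=1}^{k} p_i$.

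First I would expand $\|\overline{p}\|^2 = \frac{1}{k^2}\bigl(\sum_{i=1}^{k}\|p_i\|^2 + \sum_{i\neq j}\langle p_i,p_j\rangle\bigr)$. Since every $p_i$ lies on the sphere of radius $r$, the term $\|p_i\|^2 = r^2$ is deterministic and contributes $kr^2$. For the cross terms I would use that the $p_i$ are i.i.d.\ and that the uniform law on the sphere is centrally symmetric, hence $\mathbb{E}[p_i] = 0$ and $\mathbb{E}[\langle p_i,p_j\rangle] = \langle\mathbb{E}[p_i],\mathbb{E}[p_j]\rangle = 0$ for $i\neq j$. Taking expectations then gives $\mathbb{E}[d^2] = \mathbb{E}[\|\overline{p}\|^2] = \frac{kr^2}{k^2} = \frac{r^2}{k}$, which already identifies the root-mean-square distance as $r/\sqrt{k}$.

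The remaining step — and the main obstacle — is to replace $\sqrt{\mathbb{E}[d^2]}$ by $\mathbb{E}[d]$, since these differ in general. Jensen's inequality always gives $\mathbb{E}[d]\le\sqrt{\mathbb{E}[d^2]} = r/\sqrt{k}$; the matching lower bound is where the high-dimensional setting is used. I would control the fluctuation of $d^2$ by noting that its only random part is $\frac{1}{k^2}\sum_{i\neq j}\langle p_i,p_j\rangle$, and that for two independent uniform unit directions $\langle u,v\rangle$ has mean $0$ and variance $1/n$ (fix $v = e_1$ and use $\mathbb{E}[u_1^2] = 1/n$). Summing the variances of the $O(k^2)$ pairwise terms yields $\mathrm{Var}(d^2) = O(r^4/n)$, so $d^2 = \frac{r^2}{k}\bigl(1+O(1/\sqrt{n})\bigr)$ and hence $d = \frac{r}{\sqrt{k}}\bigl(1+O(1/\sqrt{n})\bigr)$ with high probability; taking expectations gives $\mathbb{E}[d] = \frac{r}{\sqrt{k}}\bigl(1+O(1/\sqrt{n})\bigr)$. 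Thus the asserted identity holds in the regime of interest (large embedding dimension $n$); read exactly, the content of the lemma is the second-moment identity $\mathbb{E}[d^2] = r^2/k$, with the stated equality understood up to this high-dimensional concentration.
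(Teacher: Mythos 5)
Your proposal is correct, and it takes a genuinely different route from the paper's proof. The paper represents each point coordinate-wise as a normalized Gaussian vector, computes the per-coordinate variance $\mathrm{Var}(x_{ij})=r^2/n$, invokes the central limit theorem to treat each coordinate of $\overline{p}$ as $\mathcal{N}(0,\tfrac{r^2}{kn})$, and concludes that $\tfrac{kn}{r^2}d^2\sim\chi^2(n)$, whence $\mathbb{E}[d^2]=r^2/k$. Your direct expansion of $\|\overline{p}\|^2$ into the deterministic diagonal $kr^2/k^2$ plus mean-zero cross terms reaches the same second-moment identity exactly and more economically: it needs only independence and central symmetry, whereas the paper's route passes through a CLT approximation in $k$ that is not actually needed for the second moment (and is only heuristic for fixed small $k$). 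On the final step you are in fact more careful than the paper: the paper writes $\mathbb{E}[d]=r/\sqrt{k}$ immediately after $\mathbb{E}[d^2]=r^2/k$, silently identifying the mean with the root-mean-square (even under its own chi-square model, the exact mean would be $\tfrac{r}{\sqrt{kn}}\cdot\sqrt{2}\,\Gamma(\tfrac{n+1}{2})/\Gamma(\tfrac{n}{2})=\tfrac{r}{\sqrt{k}}(1-O(1/n))$). Your concentration argument---$\mathrm{Var}(d^2)=O(r^4/n)$ from the $1/n$ variance of inner products of independent directions, hence $\mathbb{E}[d]=\tfrac{r}{\sqrt{k}}(1+O(1/\sqrt{n}))$---is exactly the justification the paper omits, and your reading of the lemma as a second-moment identity holding up to high-dimensional concentration is the honest interpretation of what is actually proved. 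This is consistent with how the lemma is used downstream (Theorem~\ref{thm:index_privacy} already assumes $n\gg1$), so nothing is lost.
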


\begin{proof}
    We place the center of the sphere at the origin.
    Since $p_i=\{x_{i1},\cdots,x_{in}\},i\in[1,k]$ is extracted from the uniform distribution on the surface of an $n$-dimensional sphere with radius $r$, the coordinates can be contructed by two steps: generating $y_{ij}\sim\mathcal{N}(0,1)$ and $x_{ij}=r\cdot\frac{y_{ij}}{\sqrt{\sum_{m=1}^{n}y_{im}^2}},j\in[1,n]$.

    Due to symmetry, $\mathbb{E}[x_{ij}]=0$,
    \begin{align*}
        &\mathbb{E}\left[\frac{y_{i1}^2}{\sum_{m=1}^{n}y_{im}^2}\right]=\cdots=\mathbb{E}\left[\frac{y_{in}^2}{\sum_{m=1}^{n}y_{im}^2}\right]\\
        =&\frac{1}{n}\cdot\sum_{j=1}^{n}\mathbb{E}\left[\frac{y_{ij}^2}{\sum_{m=1}^{n}y_{im}^2}\right]=\frac{1}{n}\cdot\mathbb{E}\left[\frac{\sum_{j=1}^{n}y_{ij}^2}{\sum_{m=1}^{n}y_{im}^2}\right]=\frac{1}{n}
    \end{align*}
    and
    \begin{align*}
        \mathrm{Var}(x_{ij})&=\mathbb{E}[x_{ij}^2]-(\mathbb{E}[x_{ij}])^2\\
        &=\mathbb{E}\left[r^2\cdot\frac{y_{ij}^2}{\sum_{m=1}^{n}y_{im}^2}\right]=\frac{r^2}{n}
    \end{align*}

    By the central limit theorem, each coordinate component $\overline{x_j}$ of $\overline{p}$ satisfies
    \[\overline{x_j}=\frac{1}{k}\cdot\sum_{i=1}^{k}x_{ij}\sim\mathcal{N}\left(0,\frac{r^2}{kn}\right),\frac{\sqrt{kn}}{r}\cdot\overline{x_j}\sim\mathcal{N}(0,1)\]

    Notice that
    \[\frac{kn}{r^2}\cdot d^2=\frac{kn}{r^2}\cdot\sum_{j=1}^{n}\overline{x_j}^2=\sum_{j=1}^{n}\left(\frac{\sqrt{kn}}{r}\cdot\overline{x_j}\right)^2\sim\chi^2(n)\]

    Thus, $\frac{kn}{r^2}\cdot\mathbb{E}\left[d^2\right]=n$ and $\mathbb{E}[d]=\frac{r}{\sqrt{k}}$.
\end{proof}

\begin{theorem}[Repeated from \cref{thm:index_privacy}]
    \label{thm:index_privacy_repeated}
    Given a target query embedding $e_k$ and the mean embedding $\overline{e}$ of top $k$ relevant document embeddings, the mean angle $\omega$ between $e_k$ and $\overline{e}$ satisfies
    \begin{equation*}
        \tan\omega=\frac{\tan\alpha_k}{\sqrt{k}}
    \end{equation*}
    where $\alpha_k$ is calculated from \cref{lem:k_alpha_k_repeated}.
\end{theorem}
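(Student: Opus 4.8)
The plan is to decompose each of the top $k$ document embeddings into a component parallel to $e_k$ and a component orthogonal to it, average the two components separately, and then read off $\tan\omega$ as the ratio of the orthogonal magnitude to the parallel magnitude of $\bar e$.

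First I would fix an orthonormal frame with $e_k$ as the first axis, so that a unit embedding at polar angle $\theta$ from $e_k$ is $\cos\theta\, e_k + \sin\theta\, u$ with $u$ a unit vector of the $(n-1)$-dimensional subspace $e_k^{\perp}$. By the definition of $\alpha_k$ in \cref{lem:k_alpha_k_repeated}, the top $k$ document embeddings lie in the spherical cap of polar angle $\alpha_k$ about $e_k$; to obtain the representative mean angle I would place each document on the boundary of this cap, i.e.\ at polar angle exactly $\alpha_k$, with orthogonal direction $u_i$ uniform on the unit sphere of $e_k^{\perp}$. Then the parallel part of every document embedding is $\cos\alpha_k\, e_k$, hence the parallel part of $\bar e$ is again $\cos\alpha_k\, e_k$, while the orthogonal part of $\bar e$ is $\frac{1}{k}\sum_{i=1}^{k}\sin\alpha_k\, u_i$, i.e.\ $\frac1k$ times a sum of $k$ points drawn uniformly from the sphere of radius $\sin\alpha_k$ in $e_k^{\perp}$.

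Next I would invoke \cref{lem:mean}, applied with ambient dimension $n-1$ and radius $\sin\alpha_k$ (its conclusion $\mathbb{E}[d]=r/\sqrt{k}$ is independent of the dimension), to get $\mathbb{E}\,\|\bar e^{\perp}\| = \sin\alpha_k/\sqrt{k}$. Since $\bar e = \cos\alpha_k\, e_k + \bar e^{\perp}$ with $\bar e^{\perp}\perp e_k$, the angle $\omega$ between $\bar e$ and $e_k$ satisfies $\tan\omega = \|\bar e^{\perp}\|/\cos\alpha_k$; taking expectations and using that the parallel magnitude $\cos\alpha_k$ is deterministic gives $\tan\omega = (\sin\alpha_k/\sqrt{k})/\cos\alpha_k = \tan\alpha_k/\sqrt{k}$, which is the claimed identity.

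The statement is thus really an identity for the expected value of $\tan\omega$, and the step that needs the most care is the modelling choice of placing the top $k$ documents on the boundary circle of the cap rather than spreading them through its interior; this is precisely what makes ``the mean angle $\omega$'' well defined and lets \cref{lem:mean} apply verbatim. Everything else is bookkeeping: the orthogonal--parallel split of $\bar e$, the dimension-independence of \cref{lem:mean}, and the elementary fact that the angle to the axis is $\arctan$ of the orthogonal-to-parallel magnitude ratio.
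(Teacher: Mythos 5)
Your proposal is correct and follows essentially the same route as the paper's proof: both place the top $k$ embeddings at polar angle exactly $\alpha_k$ (justified by the high-dimensional concentration), apply \cref{lem:mean} to the resulting uniform points on the radius-$\sin\alpha_k$ sphere in $e_k^{\perp}$, and obtain $\tan\omega$ as the ratio $\frac{\sin\alpha_k/\sqrt{k}}{\cos\alpha_k}$. Your explicit parallel/orthogonal decomposition is just a more algebraic rendering of the paper's geometric picture with $OA$ and $AB$.
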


\begin{figure}[ht]
    \centering
    \includegraphics[width=0.9\columnwidth]{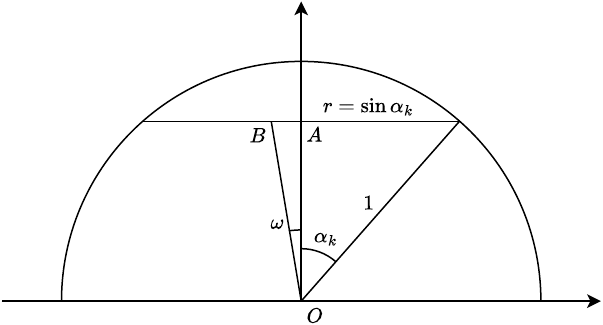}
    \caption{Illustration of the proof of \cref{thm:index_privacy_repeated}.}
    \label{fig:thm6proof}
\end{figure}

\begin{proof}
    \cref{lem:k_alpha_k_repeated} tells us that top $k$ embeddings are within the polar angle $\alpha_k$.
    When $n\gg1$, we approximately believe that the angle they make with $e_k$ is exactly $\alpha_k$, which means $k$ embeddings are uniformly distributed on the surface of an ($n-1$)-dimensional sphere with radius $\sin\alpha_k$.

    Applying \cref{lem:mean} and referring to \cref{fig:thm6proof}, $\mathbb{E}[AB]=\frac{\sin\alpha_k}{\sqrt{k}}$.
    Since $OA=\cos\alpha_k$,
    \[\tan\omega=\frac{\mathbb{E}[AB]}{OA}=\frac{\sin\alpha_k}{\cos\alpha_k\sqrt{k}}=\frac{\tan\alpha_k}{\sqrt{k}}\]
\end{proof}

\end{document}